\documentclass[manuscript,nonacm]{acmart}

\AtBeginDocument{%
  }

\copyrightyear{2026}

\usepackage{graphicx} 
\usepackage{amssymb}
\usepackage{tabularray}
\usepackage{amsmath}
\usepackage{caption}
\usepackage{tikz}
\usepackage{bm}
\usepackage{comment}
\usepackage{hyperref}
\usepackage{multirow}
\usepackage{tabularx}
\usepackage{bbm}
\usepackage[linesnumbered,ruled,vlined]{algorithm2e}
\usepackage{booktabs} 
\usepackage{array}    
\usepackage{geometry} 
\usepackage{pdflscape} 
\usepackage{cleveref}
\usepackage{enumitem}
\usepackage{xcolor}  
\usepackage{colortbl}  
\usepackage{tabularx}
\usetikzlibrary{positioning}
\usetikzlibrary{patterns}

\usepackage{pgfplots}
\usepackage{subcaption}
\usepackage{balance}
\pgfplotsset{compat=1.18}
\usepgfplotslibrary{groupplots}
\pgfplotsset{
    /pgfplots/ybar legend/.style={
        /pgfplots/legend image code/.code={%
            \draw[##1,/tikz/.cd,yshift=-0.25em]
            (0cm,0cm) rectangle (3pt,0.8em);
        },
    },
}

\begin{document}

\title[Realistic Counterfactual Explanations via Denial Constraints]{Realistic Counterfactual Explanations via Denial Constraints}
\newcommand{\todocolor}[1]{\textcolor{cyan}{#1}}
  \newcommand{\todocolorb}[1]{\textcolor{teal}{#1}}
  \newcommand{\todocolorc}[1]{\textcolor{red}{#1}}
  \newcommand{\todocolord}[1]{\textcolor{purple}{#1}}
  \newcommand{\todocolore}[1]{\textcolor{olive}{#1}}  
\newcommand{\daniel}[1]{\todocolor{[[Daniel: #1]]}}
\newcommand{\avia}[1]{\todocolorc{[[Avia: #1]]}}
\newcommand{\ag}[1]{\todocolorb{[[AG: #1]]}}
\newcommand{\nave}[1]{\todocolord{[[Nave: #1]]}}
\newcommand{\sysname}{\texttt{Ours}}
\definecolor{maroon}{cmyk}{0,0.87,0.68,0.32}
\newcommand{\revision}[1]{{\color{blue}#1}}
\newlength{\figheight}
\setlength{\figheight}{4.5cm}

\newcommand{\DiCE}{\textsf{DiCE}}
\newcommand{\PandP}{\textsf{P\&P}}
\newcommand{\PreProc}{\textsf{PreProc}}
\newcommand{\Suspect}{\textsf{Suspect}}
\newcommand{\Vanilla}{\textsf{Vanilla}}
\newcommand{\BestInDataset}{\textsf{Best-in-Dataset}}
\newcommand{\Exhaustive}{\textsf{Exhaustive}}
\newcommand{\PlusDiv}{\textsf{+Div}}
\newcommand{\Ours}{\textsf{Ours}}
\newcommand{\LinOpt}{\textsf{LinOpt}}

\author{Avia Asael}
\affiliation{
  \institution{Tel Aviv University}
  \city{Tel Aviv}
  \country{Israel}
}
\email{aviaasael@mail.tau.ac.il}

\author{Daniel Deutch}
\affiliation{
  \institution{Tel Aviv University}
  \city{Tel Aviv}
  \country{Israel}
}
\email{danielde@post.tau.ac.il}

\author{Nave Frost}
\affiliation{
  \institution{eBay Inc.}
  \city{Netanya}
  \country{Israel}
}
\email{nafrost@ebay.com}

\author{Amir Gilad}
\affiliation{
  \institution{The Hebrew University of Jerusalem}
  \city{Jerusalem}
  \country{Israel}
}
\email{amirg@cs.huji.ac.il}

\begin{abstract}

In the realm of Explainable AI, classification results are often explained via {\em counterfactuals} (CFs for short), which are (ideally small) perturbations to an instance that lead to a change of classification label. Such CFs may serve as explanations for the prediction, pinpointing the features that were important. Existing explainability solutions typically aim at {\em minimizing the distance} of CFs from the original instance so that they are specific to it; and/or {\em maximizing the diversity} of CFs  to cover multiple facets of the reasons underlying the prediction. In this paper, we note that in pursuing these aims, state-of-the-art explainability solutions may (and often do) yield counterfactual explanations that do not correspond to realistic instances. This limits their applicability and usefulness in practice. To remedy this, we    
combine ideas from Explainable AI with ideas from data management. Specifically, we capture realism of CFs via {\em logical constraints} that hold with respect to a dataset of examples (e.g. training set); the class of such constraints that we focus on is that of {\em denial constraints}, extensively studied in the context of relational databases. Algorithmically, we then combine explainable AI solutions to yield CFs, with ideas from data cleaning that we adapt to this unique setting, to transform CFs into realistic ones. Extensive experiments across four datasets validate that our solutions achieve realism with relatively minor compromise in terms of distance and diversity. They further validate that the dedicated optimizations that we have developed to speed up the search for CFs are indeed highly effective.    
\end{abstract}

\keywords{Explainable AI, Counterfactuals, Denial Constraints}

\maketitle
\newcommand\kddavailabilityurl{https://doi.org/10.5281/zenodo.20271397}
\ifdefempty{\kddavailabilityurl}{}{
\begingroup\small\noindent\raggedright\textbf{Resource Availability:}\\
The source code of this paper has been made publicly available at \url{\kddavailabilityurl}.
\endgroup
}
\section{Introduction}

Counterfactuals (CFs) explain an ML prediction made by a given model for a given instance, through perturbations that yield a change to the assigned label. Intuitively, these perturbations pinpoint the features that were significant for the original prediction. Indeed, there is a large body of research on CFs, both in terms of modeling (which CFs are more useful for explainability) and in terms of algorithms (how to efficiently find highly useful CFs)~\cite{verma2024counterfactual,guidotti2024counterfactual,jiang2024robust}.
  
Consider, for example, the NY Housing dataset~\cite{ny_housing}, a snippet of which appears at the top of Figure~\ref{fig:dataset_dcs}. It includes information on real estate assets, such as their type, number of bedrooms/bathrooms, square footage, sublocality (i.e. neighborhood), etc. A natural classification task is to decide whether an asset's price exceeds $1M$ USD. If the model outputs ``no'' for a particular asset, CFs could identify reasons: perhaps the same asset in a different neighborhood would exceed $1M$ USD? Perhaps a slightly larger square footage would pass the bar? Etc. Among many applications, the availability of such explanations promotes transparency and accountability, and allows identifying bugs and biases in the model.~\cite{wachter2017counterfactual,mothilal2020explaining,verma2024counterfactual}   

To be useful, CFs should be {\em in the proximity} of the original instance: continuing our example, price estimates for very dissimilar assets are uninformative. They should also ideally be {\em diverse}: as exemplified above, there are many possible reasons for the classification, and diverse CFs could cover much of these grounds. Furthermore, they should be {\em realistic}: an explanation stating, for a Manhattan condo, that having $10$ bedrooms would push the price over $1M$ USD is not very useful, if there are no 10-bedroom condos in Manhattan. 

Several concrete notions for capturing proximity (i.e. distance functions) have been proposed, and the problem of finding CFs that optimize them  has been extensively investigated~\cite{wachter2017counterfactual,dandl2020multi,mothilal2020explaining,schleich2021geco}. This in particular includes standard measures such as $L0$ (number of features that have changed) or $L1$ (magnitude of change), where a common variant normalizes per-feature change magnitudes before aggregation. Likewise, the problem of finding diverse CFs has been extensively studied~\cite{mothilal2020explaining,dandl2020multi}. 

By contrast, capturing realism is much more subtle, and here there is a wide range of approaches ranging from restricting CFs to be members of an explicitly given dataset (e.g. the training set) to learning a complex model of what is realistic (see Section \ref{sec:related} for further discussion). Recent work \cite{ben2024cafa}, in a different context of adversarial Machine Learning, has proposed to capture realism via {\em database constraints}, and specifically {\em Denial Constraints} (DCs)~\cite{ChomickiM05}. Originally introduced to capture data integrity, these logical constraints were shown in \cite{ben2024cafa} to strike a good balance between allowing for inference and correctly capturing realism. To illustrate, Table \ref{tab:example_constraints} shows example DCs for 3 standard ML benchmarks. The first two are {\em unary} DCs: they capture constraints, in the form of negated conjunctions, over individual tuples/instances, e.g. that no individual younger than 17 holds a Doctorate, or that Manhattan apartments do not have 10 or more bedrooms. The third is an example of a {\em binary} DC (universally quantified over pairs of tuples, rather than a single tuple), reflecting in this case that tax child exemption laws are consistent for individuals in every state. To capture realism of individual instances (in our case, of CFs), we can think of such binary DCs as a compact representation of many unary ones: given an instance, it will be realistic only if it matches the child exemption status of individuals with the same characteristics in the same state. Using DCs as our basis for capturing realism also has the advantage that their inference/mining from datasets has been extensively studied, and out-of-the-box solutions such as FastADC \cite{xiao2022fast} may be employed.

Our work focuses on \emph{tabular} data, the primary application domain for CF explanations~\cite{verma2024counterfactual,guidotti2024counterfactual} and the setting of all major CF methods. A separate line of work studies CFs for images~\cite{goyal2019counterfactual,jeanneret2022diffusion}, where realism is captured via generative or natural-image priors; our constraint-based notion is specific to relational data, and extending it to unstructured domains is left to future work.

In this paper, we study for the first time the problem of optimizing distance and diversity of CFs {\em while constraining them to adhere to DCs}. This combination has not been studied before (see Section \ref{sec:related} for related work overview); in particular, state-of-the-art solutions for diverse CFs such as DiCE \cite{mothilal2020explaining} are agnostic to constraints and indeed often yield CFs that violate them (as we demonstrate in Section~\ref{sec:experiments}). 

Algorithmically, finding realistic, diverse, and close CFs requires addressing several layers of complexity. First, even without realism constraints, generating optimal CFs is intractable unless $P=NP$ for common model families such as Neural Networks and Random Forests \cite{deutch2019constraints}. Second, we show that projecting a tuple onto the space of realistic tuples (namely those consistent with a given set of DCs) is also intractable unless $P=NP$, even for unary DCs. Third, realism may, and often does, conflict with both the flip-of-label and with the diversity aims.

Our algorithmic solution is then a perturb-and-project framework that repeatedly iterates between {\em perturbation}, namely constraints agnostic CF generation (we incorporate DiCE~\cite{mothilal2020explaining}, a state-of-the-art method for diverse CFs) and {\em projection} onto the space of DC-consistent tuples, which is essentially an optimization-under-constraints problem. A difficulty is that a straightforward use of optimizers such as SMT solvers~\cite{de2008z3} for the latter problem is prohibitively inefficient. This is due to two unique features in our setting: (1) we need to invoke the solver many times in a single execution of the CF-search solution. This is both because we look for multiple diverse CFs and because of the above mentioned conflict between realism and flip-of-label, requiring multiple iterations; (2) Binary DCs must be checked against each relevant tuple in the database, resulting in a large number of assertions for the solver to evaluate.     
We address this through several dedicated optimizations that we have developed: {\em pre-processing and caching} of solver instances; a dedicated constraint-pruning technique, inspired by works in data cleaning~\cite{chu2013holistic,rekatsinas2017holoclean}, that identifies the relevant subset of database tuples ({\em suspect set}); and the incorporation of explicit {\em diversity-based constraints} to avoid redundant projections. The framework is \emph{model-agnostic}: the projection step, our core contribution, treats the classifier as a black-box labeling oracle.

We have conducted an extensive experimental study over four standard benchmarks (Adult, NY Housing, Tax, and Census). Our results show that 55.9--100\% of the CFs produced by DiCE violate Denial Constraints, while our solutions achieve zero violations, namely all CFs that we generate are realistic. This realism comes at only a modest cost: proximity and diversity degrade by under 10\% and ${\sim}3\%$ respectively on most datasets, compared to DiCE CFs. In terms of execution time, the constrained problem is naturally hard, and while a direct SMT-solver formulation is the natural exact approach, it does not scale with over 18{,}000 seconds for a single CF projection on Census. Our optimizations reduce that by up to $63\times$ while preserving the same guarantees.

\begin{table}[t]
    \centering
    \caption{Example Denial Constraints}
    \label{tab:example_constraints}
    \resizebox{\columnwidth}{!}{%
    \begin{tabular}{|l|p{8cm}|}
      \hline
      \textbf{Dataset} & \textbf{Example Denial Constraint} \\ 
      \hline
      Adult & $\neg\{t_0.\text{age} < 17 \land t_0.\text{education} = \text{``Doctorate''}\}$ \\
      \hline
      NY & $\neg\{t_0.\text{sublocality} = \text{``Manhattan''} \land t_0.\text{beds} \geq 10\}$ \\
      \hline
      Tax & $\neg\{t_0.\text{State} = t_1.\text{State} \land t_0.\text{HasChild} = t_1.\text{HasChild} \land t_0.\text{ChildExemp} \neq t_1.\text{ChildExemp}\}$ \\
      \hline
    \end{tabular}%
    }
\end{table}

\paragraph*{Summary of Main Contributions} Our main contributions are:
\begin{enumerate}
\item We formalize the problem of generating diverse counterfactual explanations that adhere to Denial Constraints, combining for the first time the goals of proximity, diversity, and DC-based realism (Section~\ref{sec:prelim}).
\item We develop a perturb-and-project framework with multiple solver optimizations, including constraint caching, suspect-set filtering, and diversity-aware projection (Sections~\ref{sec:probdef}--\ref{sec:project}).
\item We show that the projection subproblem---finding the nearest DC-consistent tuple---is NP-hard (Section~\ref{sec:project}).
\item We conduct an extensive experimental evaluation on four benchmarks, demonstrating that our approach eliminates constraint violations while maintaining proximity and diversity comparable to unconstrained baselines (Section~\ref{sec:experiments}).
\end{enumerate}
\section{Model}
\label{sec:prelim}
We next review basic notions of databases, constraints, and counterfactuals from previous work, leading to our problem statement.
\begin{figure}[t]
\centering
\begin{minipage}{\columnwidth}
\centering
\small
\begin{tabular}{clcccl}
\toprule
ID & Type & Beds & Bath & Sqft & Subloc. \\
\midrule
$d_1$ & Condo & 2 & 2 & 1400 & Manhattan \\
$d_2$ & Condo & 3 & 2 & 704 & Brooklyn \\
$d_3$ & Condo & 2 & 4 & 1568 & Staten\_Island \\
$d_4$ & House & 5 & 6 & 4357 & NY \\
\bottomrule
\end{tabular}
\end{minipage}

\vspace{1em}

\begin{minipage}{\columnwidth}
\small
\textbf{Denial Constraints:}\\[0.5em]
\textbf{DC1}: $\forall t_1, t_2.~ \neg(t_1.\text{Type} = t_2.\text{Type} \wedge t_1.\text{Beds} > t_2.\text{Beds} \wedge t_1.\text{Bath} > t_2.\text{Bath} \wedge t_1.\text{Sqft} < t_2.\text{Sqft})$ \\[0.3em]
\textbf{DC2}: $\forall t.~ \neg(t.\text{Subloc.} = \text{`Manhattan'} \wedge t.\text{Beds} > 4)$ \\[0.3em]
\textbf{DC3}: $\forall t.~ \neg(t.\text{Subloc.} = \text{`Manhattan'} \wedge t.\text{Bath} > 4)$
\end{minipage}
\caption{Sample tuples and denial constraints from the NY Housing Dataset~\cite{ny_housing}. Locality attribute omitted for brevity.}
\label{fig:dataset_dcs}
\end{figure}

\subsection{Databases and Denial Constraints}

A relational schema \cite{abiteboul1995foundations} consists of a {\em relation name}  
$\mathcal{R}$ along with a set of {\em attributes} $\{A_1, \ldots, A_n\}$, where $A_j$ is associated with a domain $dom(A_j)$. A {\em tuple} $t$ maps each attribute $A_j$ to a value $t.A_j \in dom(A_j)$. A schema in general may include multiple relations, but we focus here on the single relation case. 

{\em Integrity Constraints} \cite{abiteboul1995foundations} are often imposed to 
capture data soundness. We focus here on a particularly expressive formalism for describing such constraints, namely Denial Constraints (DCs)~\cite{ChomickiM05,rekatsinas2017holoclean}. DCs are universally quantified first order logic over tuples over a particular relation, reflecting constraints over tuples that may (co-)occur. Specifically, a {\em binary DC} over $sch(\mathcal{R})$ is a first-order formula: 
$\sigma = \forall t_1, t_2 \in \mathcal{R}~ \neg(P_1 \wedge \ldots \wedge P_m)$, 
where each $P_i$ is a predicate of the form $t_1.A \circ t_2.A$ or $t_1.A \circ a$ such that (1) $A \in sch(\mathcal{R})$, and (2) $a \in dom(A)$, and (3) $\circ \in \{=, \neq, >, <, \geq, \leq\}$. Similarly, an {\em unary DC} has the form $\sigma = \forall t \in \mathcal{R}~\neg(P_1 \wedge \ldots \wedge P_m)$ where each $P_i$ has the form $t.A \circ a$ with $a$ and $\circ$ as before. For a given binary DC $dc=\forall t_1, t_2 \in \mathcal{R}~ \neg(P_1 \wedge \ldots \wedge P_m)$ and a pair of tuples $t,t'$ we use $dc[t,t']$ to denote the instantiation of $\neg(P_1 \wedge \ldots \wedge P_m)$ with values from $t$ and $t'$, and we then say $t,t' \models dc$ if $dc[t,t']$ evaluates to $true$ and $t,t' \not\models dc$ otherwise. We similarly define $t \models dc$ and $t \not\models dc$ for unary DCs. We overload notation for sets of DCs (all must hold) and for relations (every tuple/pair must satisfy the DCs).  

\begin{example}
\label{ex:dataset_dcs}
The NY housing dataset~\cite{ny_housing} includes information on properties in New York and is commonly used as a benchmark in ML research, where the prediction goal is to decide whether a property is valued at over $\$1M$. Figure~\ref{fig:dataset_dcs} shows the dataset schema along with $4$ sample tuples. It also shows three example DCs with respect to it: DC1 is a binary DC intuitively stating that for properties of the same type, having more bedrooms \emph{and} more bathrooms implies at least as much square footage. DC2 and DC3 are unary DCs reflecting Manhattan's physical space constraints, namely properties located there do not have more than 4 bedrooms or bathrooms. Indeed, every pair of tuples shown in Figure~\ref{fig:dataset_dcs} satisfies \text{DC1}, and every individual tuple satisfies \text{DC2} and \text{DC3}. We will later (Example~\ref{ex:cf_intro}) show examples of cases where DCs are not satisfied.
\end{example}

\subsection{Counterfactual Explanations}
\begin{table}[t]
\caption{Counterfactuals for tuple $t$. Changed values shown with arrows. $p_{1-3}$ are generated by DiCE perturbation; Ours$_{1-3}$ are our realistic results after projection.}
\label{tab:comparison}
\small
\centering
\begin{tabular}{lcccccc}
\toprule
ID & Type & Beds & Bath & Sqft &  Subloc. & Violation \\
\midrule
$t$ & Condo & 1 & 1 & 679 & Manhattan & --- \\
\midrule
Ours$_1$ & Condo & 1 & $\rightarrow$3 & $\rightarrow$1568 & Manhattan & --- \\
Ours$_2$ & Condo & $\rightarrow$4 & 1 & $\rightarrow$2365 & Manhattan & --- \\
Ours$_3$ & Condo & $\rightarrow$4 & $\rightarrow$2 & $\rightarrow$3075 & Manhattan & --- \\
\midrule
\rowcolor{maroon!20} $p_1$ & Condo & 1 & 1 & $\rightarrow$1750 & Manhattan & DC1 \\
$p_2$ & Condo & $\rightarrow$4 & 1 & $\rightarrow$2365 & Manhattan & --- \\
\rowcolor{maroon!20} $p_3$ & Condo & $\rightarrow$6 & $\rightarrow$3 & $\rightarrow$4103 & Manhattan & DC1, DC2 \\
\bottomrule
\end{tabular}
\end{table}

Given a classifier $M$ and a tuple $t$ with label $M(t)$, a counterfactual (CF) is a tuple $cf$ such that $M(cf) \neq M(t)$~\cite{wachter2017counterfactual}. Intuitively, a CF identifies (preferably small) changes that flip the model's prediction, revealing influential features. Beyond flipping classification, CFs should satisfy proximity (minimal distance), diversity (varied explanations), and realism~\cite{mothilal2020explaining,dandl2020multi}. We next overview specific measures/notions to capture each desideratum.

\paragraph{Proximity (Distance Functions)} To capture proximity, one needs to define a distance function over tuples. Following common practice~\cite{wachter2017counterfactual,mothilal2020explaining,dandl2020multi}, we separately quantify distance  for categorical and numeric attributes and then discuss their combination. For categorical attributes, the common practice is to count the number of changes, namely to use $L_0$: 
\begin{equation}
\label{eq:cat_dist}
dist_{cat}(t,t') = \sum_{A \in attr_{cat}} \mathbbm{1}(t.A \neq t'.A)
\end{equation}
For numeric attributes, by contrast, the magnitude of change should be accounted for. A common way to normalize is based on the Median Absolute Deviation (MAD) ~\cite{mothilal2020explaining, wachter2017counterfactual}, which resembles standard deviation but is considered more robust ~\cite{leys2013detecting}:
\begin{equation}
\small
MAD(A,\mathcal{R}) = \text{median}(\{|t.A - \text{median}(\{t.A : t \in \mathcal{R}\})| \mid t \in \mathcal{R}\})
\label{eq:mad}
\end{equation}

The numeric distance is then defined as:
\begin{equation}
\label{eq:dist_num}
dist_{num}(t,t') = \sum_{A \in attr_{num}}\frac{|t.A - t'.A|}{MAD(A,\mathcal{R})}
\end{equation}

\begin{example}
\label{ex:dist}
Consider tuple $t$ from Table~\ref{tab:comparison} and tuple $d_4$ from Figure~\ref{fig:dataset_dcs}. First, $dist_{cat}(t, d_4) = 2$ since both Type (Condo vs.\ House) and Sublocality (Manhattan vs.\ NY) differ. To compute numeric distance, we first compute the MAD values (over the dataset in Figure~\ref{fig:dataset_dcs}) to be used to normalize. We have $\text{MAD(Beds)} = 1.0$, $\text{MAD(Bath)} = 1.0$, and $\text{MAD(Sqft)} = 608.5$; the latter indicates that square feet varies a lot.  We then have:
\[ dist_{num}(t, d_4) = \tfrac{|1-5|}{1.0} + \tfrac{|1-6|}{1.0} + \tfrac{|679-4357|}{608.5} = 4 + 5 + 6.04 = 15.04 \]
\end{example}

The numeric and categorical distance may then be aggregated in different ways to form an overall distance function over tuples. A common choice is to simply use their (possibly weighted) sum as done e.g. in the implementation of \cite{mothilal2020explaining}:
\begin{equation}
\label{eq:dist_agg}
dist_{agg}(t,t') = dist_{cat}(t,t') + dist_{num}(t,t')
\end{equation}
We will also need to use a distance between a set of tuples and a given tuple, and for that we define $dist_{agg}(T,t')=\frac{\sum_{t\in T} dist_{agg}(t,t')}{|T|}$, overloading notation.

Our framework is robust to the choice of distance function and we discuss alternatives in Appendix~\ref{app:distance}.

\paragraph{Diversity} We aim to find multiple CFs that are diverse, thereby covering different reasons for the classification~\cite{wachter2017counterfactual}. A common diversity measure is that of  \emph{determinantal point process (DPP)} diversity~\cite{mothilal2020explaining}. Let $C = \{t'_1, \ldots, t'_k\}$ be a set of CFs, and let $K$ be a matrix with entries $K_{i,j} = \frac{1}{1+dist_{agg}(t'_i,t'_j)}$. We define: 
\begin{equation}
div(C) = det(K)
\label{eq:dpp}    
\end{equation}

Intuitively, when CFs are very similar to each other, namely $dist_{agg}(t'_i,t'_j)$ values are typically small for $i \neq j$, the off-diagonal entries $K_{i,j}$ are close to 1 and the matrix has low determinant. Conversely, when they are very diverse (namely large distances), off-diagonal entries approach 0, and $K$ approaches the identity matrix with $det(K) \rightarrow 1$ \cite{kulesza2012determinantal}.
\begin{example}
\label{ex:diversity}
Consider two sets of tuples corresponding to the NY-housing schema: 
\begin{align*}
C_1 &= \{(\text{Condo}, 2, 2, 1300, \text{Queens}), (\text{Condo}, 3, 2, 1200, \text{Queens})\}\\
C_2 &= \{(\text{Condo}, 1, 1, 679, \text{Manhattan}), (\text{House}, 3, 2, 1824, \text{Brooklyn})\}
\end{align*}
In $C_1$, tuples have the same values in the type, bath, and sublocality attributes, differing only slightly in the beds and sqft attributes. Conversely, in $C_2$, tuples differ across all attributes. Indeed, $div(C_1) = 0.786 < div(C_2) = 0.983$.
\end{example}

\paragraph{Scoring function} Since distance and diversity are often negatively correlated, a common practice~\cite{mothilal2020explaining,dandl2020multi} is to balance them via a scoring function. Let $w_1, w_2 \geq 0$ with $w_1 + w_2 = 1$ be weight parameters. Given a tuple $t$ and a set $\mathcal{C}$ of CFs for it (w.r.t. a given model) we define:
\begin{equation}
score_t(\mathcal{C}) = w_1 \cdot div(\mathcal{C}) - w_2 \cdot dist_{agg}(\mathcal{C}, t)
\label{eq:score}
\end{equation}

\paragraph{Constraints} We impose two kinds of hard constraints on CFs. The first is that the CFs do not modify particular (pre-defined) attributes called {\em immutable attributes}~\cite{verma2024counterfactual,ustun2019actionable}, thereby focusing the search on desirable attribute modifications. The second type of constraints captures realism: the resulting CF instance should be consistent with a given database and a given set of DCs, namely that its inclusion in the database would not violate any DCs.

\begin{example}
\label{ex:cf_intro}
Consider a classifier $M$ predicting whether a property exceeds \$1M, and the tuple $t$ from Table \ref{tab:comparison}. Assume that $M(t)=0$, namely $M$ assigns label $0$ to the corresponding instance. Further assume that $p_1,p_2,p_3$ in Table \ref{tab:comparison} capture 3 CFs w.r.t. $M$ and $t$, namely $M(p_i)=1 \neq M(t)$. In terms of constraints, we impose the following: (1) sublocality is immutable, reflecting that we wish to examine the effect of other attributes while keeping the neighborhood intact; (2) the resulting CFs should be consistent with DC1 and DC2 with respect to the dataset in Figure \ref{fig:dataset_dcs}. The CF $p_1$, for instance, fails to adhere to these constraints: it is inconsistent with  both $d_1$ and $d_2$ with respect to $DC1$; intuitively, there is a mismatch between its number of rooms and square footage information. $p_3$ is invalid in a similar way based on $DC1$, and another reason for its invalidity is captured by $DC2$: it corresponds to a property with $6$ bedrooms whereas $DC2$ captures a constraint that no property may have more than 4 bedrooms.
\end{example}

\paragraph*{Problem Statement} We next ``put it all together'' to obtain the problem statement to be studied in the sequel. For ease of presentation, we focus on binary classification but the construction is generalizable to the multi-class setting where one defines CFs with respect to a target label. We then define the problem as follows:
\begin{definition}[Diverse Realistic Counterfactuals]\label{def:feasible-cf}
Given a relation $\mathcal{R}$ over a schema $\mathcal{S} = \{A_1, \ldots, A_m\}$, a classifier $M$, a set of DCs $\Sigma$ such that $\mathcal{R} \models \Sigma$, a subset $\mathcal{I} \subseteq \mathcal{S}$ of \emph{immutable attributes}, a natural number $k$, and a tuple $t \in \mathcal{R}$, we aim to find a set $\mathcal{C}$ of size $k$ that is a solution to:
\begin{align*}
\max_{\mathcal{C}} \quad & score_t(\mathcal{C}) \\
\text{s.t.} \quad & \forall cf \in \mathcal{C}:~ cf \in dom(A_1) \times \ldots \times dom(A_m),~ M(cf) \neq M(t) \\
& \forall cf \in \mathcal{C}:~ \mathcal{R} \cup \{cf\} \models \Sigma,~ \forall A \in \mathcal{I}:~ cf.A = t.A
\end{align*}
\end{definition}

Intuitively, we aim to find a set of $k$ CFs for a specific tuple $t$ such that each one is within the domain of the schema attributes, it has identical immutable attributes to $t$, and when added to the database, each CF does not cause DC violations. 
This set of CFs should maximize the score function across all sets that adhere to these criteria. 

\begin{example}
\label{ex:score}
Consider the set $\mathcal{C} = \{\text{Ours}_{1-3}\}$ from Table~\ref{tab:comparison}. Each CF satisfies the constraints: immutable attributes are preserved (Sublocality remains Manhattan), and $\mathcal{R} \cup \{cf\} \models DC_{1-2}$ for each $cf \in \mathcal{C}$. With $dist_{agg}(\mathcal{C}, t) = 5.72$, $div(\mathcal{C}) = 0.88$, and weights $w_1 = \frac{2}{3}$, $w_2 = \frac{1}{3}$ (a standard choice following~\cite{mothilal2020explaining}), the overall score is $score_t(\mathcal{C}) = \frac{2}{3} \cdot 0.88 - \frac{1}{3} \cdot 5.72 = -1.32$.
\end{example}
\section{Generating Realistic Counterfactuals}
\label{sec:probdef}


Algorithm~\ref{alg:perturb-project} presents our framework for finding realistic and diverse CFs. We maintain the collection of CFs found so far (Line \ref{line:init}) and a candidate queue (Line \ref{line:list}). We iterate (Line \ref{line:while}) until sufficiently many CFs are found. In each iteration, we pop a candidate (Line \ref{line:popfirst}) and invoke {\tt PERTURB} (Line \ref{line:perturb}) to generate $k$ CFs consistent with immutable attributes, ignoring DCs for now. As such, the resulting CFs may violate DCs.  For that, we invoke a {\tt PROJECT} function (Line \ref{line:project}, details below) whose goal is to find a tuple $p'$ in the proximity of each CF $p$. The resulting $p'$ may or may not be a CF. If it is (Line \ref{line:accept}), we add it to the collection of realistic CFs; and otherwise (Line \ref{line:reject}) we add it to the queue for the next iteration. This process may result in more than $k$ CFs and the final step is to greedily choose $k$ CFs out of them (Line \ref{line:select}, details below). Figure \ref{fig:algo_graph}  includes a schematic illustration of this iterative process.         
We next detail the implementation of {\tt PERTURB}, {\tt PROJECT}, and {\tt CHOOSEK-DIVERSE}. {\tt PERTURB} is fairly standard and   {\tt CHOOSEK-DIVERSE} is fairly simple, and we describe both here. By contrast, {\tt PROJECT} requires new development and is described at a high-level here and in more detail in the next section.


\begin{figure}[t]
\centering
\begin{tikzpicture}[scale=0.80, transform shape, node distance={18mm},
    label0/.style = {draw, circle, minimum size=12pt, font=\footnotesize, fill=red!50, pattern=north east lines, pattern color=red!90},
    label1/.style = {draw, circle, minimum size=12pt, font=\footnotesize, fill=green!40},
    selected/.style = {draw, circle, minimum size=12pt, font=\footnotesize, fill=green!40, draw=blue!80, line width=0.9mm, double, double distance=0.5pt}]
\node[label0] (0) {$t$};
\node[label1] (1) [above right of=0]{};
\node[label1] (2) [right of=0]{};
\node[label1] (3) [below right of=0]{};
\node[selected] (4) [right = of 1]{};
\node[label0] (5) [right = of 2]{};
\node[selected] (6) [right = of 3]{};
\node[label1] (7) [right = 12mm of 4]{};
\node[label1] (8) [right = 12mm of 5]{};
\node[label1] (9) [right = 12mm of 6]{};
\node[label1] (10) [right = of 7]{};
\node[label0] (11) [right = of 8]{};
\node[selected] (12) [right = of 9]{};
\draw[->,blue, dashed] (0) -- node [near start,above=1mm] {\small per.}(1);
\draw[->,blue, dashed] (0) -- node [pos=0.6,above=0.4mm] {\small per.}(2);
\draw[->,blue, dashed] (0) -- node [near start,below=1mm] {\small per.}(3);
\draw[->,orange] (1) -- node [midway,above=1mm] {\small proj.}(4);
\draw[->,orange] (2) -- node [midway,above=1mm] {\small proj.}(5);
\draw[->,orange] (3) -- node [midway,above=1mm] {\small proj.}(6);
\draw[->,blue, dashed] (5) -- node [midway,above=1mm] {\small per.}(7);
\draw[->,blue, dashed] (5) -- node [midway,above=0.5mm] {\small per.}(8);
\draw[->,blue, dashed] (5) -- node [pos = 0.9,above=1.4mm] {\small per.}(9);
\draw[->,orange] (7) -- node [midway,above=1mm] {\small proj.}(10);
\draw[->,orange] (8) -- node [midway,above=1mm] {\small proj.}(11);
\draw[->,orange] (9) -- node [midway,above=1mm] {\small proj.}(12);
\end{tikzpicture}
\caption{Perturb-and-project for CF generation. Hatched nodes: label 0; solid green: label 1; bordered: final realistic CFs. Dashed blue edges: perturbation; solid orange: projection.}
\label{fig:algo_graph}
\end{figure}

\begin{algorithm}[t]
\caption{Perturb-and-Project for Realistic CFs}\label{alg:perturb-project}
\SetAlgoLined
\SetKwInOut{Input}{input}
\SetKwInOut{Output}{output}
\DontPrintSemicolon
\Input{Original tuple $t$, number of CFs $k$, classifier $M$, relation $\mathcal{R}$, set of DCs $\Sigma$,  immutable attributes set $\mathcal{I}$}
\Output{Set $\mathcal{C}$ of $k$ realistic counterfactuals}
$\mathcal{C} \gets \emptyset$\; \label{line:init}
$L \gets [t]$\ \label{line:list} \tcp{FIFO: pop front, append back}
\While{$|\mathcal{C}| < k$ and $L \neq \emptyset$}{ \label{line:while}
    $t' \gets L.\text{popFirst}()$\;\label{line:popfirst}
    $P \gets \textsc{Perturb}(t', M,k,\mathcal{I})$\; \label{line:perturb}
    \For{each $p \in P$}{ \label{line:forloop}
        $p' \gets \textsc{Project}(p, \mathcal{R}, \Sigma, \mathcal{I})$\; \label{line:project}
        \eIf{$M(p') = 1$}{ \label{line:check}
            $\mathcal{C} \gets \mathcal{C} \cup \{p'\}$\; \label{line:accept}
        }{
            $L.\text{append}(p')$\; \label{line:reject}
        }
    }
}
\Return $\textsc{ChooseK-Diverse}(\mathcal{C}, t, k)$\; \label{line:select}
\end{algorithm}

\paragraph{Perturbation Phase (Line~\ref{line:perturb})} The {\tt PERTURB} function is assigned with outputting $k$ diverse CFs in the proximity of a given tuple, which are further consistent with the immutable attributes (it is agnostic to DCs). This sub-problem has been studied in the literature and shown to be NP-hard for models such as Neural Networks and Random Forests (see e.g.~\cite{deutch2019constraints}). On the other hand, effective heuristic solutions have been developed, and a particularly notable such solution is DiCE~\cite{mothilal2020explaining}. DiCE finds CFs for Neural Networks based on backpropagation, jointly optimizing for classification confidence, distance from the input, and diversity. Other solutions are discussed in Section~\ref{sec:related}.  



\begin{example}[Perturbation Phase]
\label{ex:perturb}

We have trained a Neural Network model $M$ over the NY housing dataset and it classifies the tuple $t$ from Table \ref{tab:comparison} to $0$ (predicted price below \$1M). Invoking DiCE over $t$ and asking for $k=3$ CFs, we obtain $p_1$, $p_2$, and $p_3$ from Table~\ref{tab:comparison}. They are CFs for $t$, namely accepted by the model, yet as mentioned above, two of them violate DCs. 
\end{example}

\paragraph{Projection Phase (Line~\ref{line:project})} 
The main novelty of our solution with respect to DiCE lies in the incorporation of the realism requirement, captured through DCs. To this end, the {\tt PROJECT} function is geared towards projecting its input onto the space of realistic tuples, namely, to find a  tuple in the proximity of the input tuple, that further satisfies the constraints. We provide details on our implementation of {\tt PROJECT} in Section \ref{sec:project}, and for now just illustrate its input and output.  
\begin{example}[Projection Phase]
\label{ex:project}
Continuing Example~\ref{ex:perturb} and referring to tuples from Table \ref{tab:comparison}, $p_1$ violates DC1 with respect to $d_1$ and $d_2$ from Figure~\ref{fig:dataset_dcs}, as well as with other tuples in the dataset not shown here for brevity.  Intuitively, for a (1,1,Sqft) condo to satisfy DC1, its Sqft must be at most the smallest Sqft among all condos with beds$>$1 and baths$>$1. This quantity is $704$ and $p_1$ is projected to obtain $(1,1,704)$. This however flips the label back to 0, leading to another iteration of perturb-and-project (see Example \ref{ex:cf_example}). By contrast, $p_2$ and $p_3$ are projected to Ours$_2$ and Ours$_3$ respectively, thereby satisfying the DCs while, in this case, maintaining the ``accept" label. 

\end{example}

\paragraph{Greedy Selection (Line~\ref{line:select})} The iterative process may yield more than $k$ valid candidates, since each DiCE invocation produces $k$ new branches (see Algorithm~\ref{alg:perturb-project} for illustration). The final phase is then to select a subset of size $k$ out of the obtained candidates set. This is done in a greedy fashion: we repeatedly choose the candidate whose inclusion in the set maximizes the overall score (with respect to the score function defined above, aggregating proximity and diversity scores) among all such choices.      
\begin{example}[Complete Pipeline]
\label{ex:cf_example}
Continuing Example \ref{ex:project}, our candidates set $\mathcal{C}=\{Ours_2,Ours_3\}$ and the queue $L=\{(1,1,704)\}$. This single tuple occupying $L$ is popped and fed as input to DiCE, which yields 3 new candidates. Each of them is again projected onto the space of valid tuples, and two of the resulting projections are accepted by the model, namely Ours$_1$=(1,3,1568) and a fourth candidate (4,1,2393). With four realistic CFs, greedy selection operates by proximity and diversity, choosing the best 3 which are $\{Ours_1,Ours_2,Ours_3\}$.
\end{example}

\section{Projecting over Denial Constraints}
\label{sec:project}

The algorithm in Section \ref{sec:probdef} invokes three functions as subroutines, for perturbation, projection, and greedy selection. Implementations for perturbation and greedy selection were described in the previous section, and here we detail the projection algorithm. Specifically, we show that the problem is NP-hard; present a solution based on SMT solvers~\cite{de2008z3}; and provide multiple crucial optimizations.  

\paragraph*{Problem Statement and Intractability} Given a tuple $t$ to project, a set $\Sigma$ of DCs and a relation $\mathcal{R}$, the projection problem is:

\begin{equation}
\begin{aligned}
\min_{t'} \quad & dist_{agg}(t, t') \quad \text{s.t.} \quad \mathcal{R} \cup \{t'\} \models \Sigma, \quad \forall A \in \mathcal{I}: t'.A = t.A
\end{aligned}
\label{eq:projection}
\end{equation}

We show that, unless P=NP, there is no hope for an exact PTIME solution, since even checking for satisfiability of DCs is already intractable. For that, we define {\tt DEC-PROJ} as the problem of deciding, given $\Sigma$ and $\mathcal{R}$ as above, whether there exists a tuple $t'$ such that $\mathcal{R} \cup \{t'\} \models \Sigma$. We show (proof in Appendix~\ref{app:proofs}):
\begin{proposition}
\label{prop:np-hard}
{\tt DEC-PROJ} is NP-hard, even if $\Sigma$ contains only unary DCs.
\end{proposition}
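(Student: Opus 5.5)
We reduce from 3-SAT. For unary DCs, whether $t'$ is consistent does not depend on $\mathcal{R}$; the only requirement is $t' \models \Sigma$. The relation $\mathcal{R}$ just has to satisfy $\Sigma$, as Definition~\ref{def:feasible-cf} and the setting presuppose, so we take $\mathcal{R}=\emptyset$. If a nonempty relation is preferred, we can add a Boolean ``guard'' attribute $G$ and prefix every DC with the predicate $t.G = 1$. Then any tuple with $G=0$ vacuously satisfies $\Sigma$ and can populate $\mathcal{R}$. In that variant, to keep the question nontrivial, we make $G$ immutable with value $1$, or equivalently add the unary DC $\neg(t.G=0)$. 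Since that DC would rule out the $G=0$ tuples in $\mathcal{R}$, the cleanest choice is simply $\mathcal{R}=\emptyset$.

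Given a 3-CNF formula $\varphi$ over variables $x_1,\dots,x_n$ with clauses $C_1,\dots,C_m$, we build a schema with one attribute $A_i$ per variable, each with domain $\{0,1\}$. A tuple $t'$ then encodes a truth assignment. For each clause $C_j=(\ell_1\vee\ell_2\vee\ell_3)$ we add the unary DC $\neg(P_1\wedge P_2\wedge P_3)$. Here $P_r$ is $t.A_i = 0$ if $\ell_r = x_i$, and $t.A_i = 1$ if $\ell_r=\neg x_i$. This DC forbids exactly the assignments that falsify $C_j$. All predicates have the form $t.A\circ a$ with $\circ$ being $=$, so the DCs fit the syntax of Section~\ref{sec:prelim}. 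The construction is clearly polynomial: $n$ attributes and $m$ DCs, each with three predicates.

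Correctness is a direct check in both directions. A tuple $t'$ with $t'\models\Sigma$ sets no clause's literals all false, so the assignment $x_i := t'.A_i$ satisfies $\varphi$. Conversely, a satisfying assignment yields such a tuple. Thus $\varphi$ is satisfiable iff {\tt DEC-PROJ} answers yes.

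The only subtle point is the domain. If the paper's intended domains are numeric, e.g.\ integers, we still encode each Boolean variable via $\{0,1\}$. To force this we either declare $dom(A_i)=\{0,1\}$, or add the unary DCs $\neg(t.A_i<0)$ and $\neg(t.A_i>1)$. With integer domains these restrict $A_i$ to $\{0,1\}$. With real domains, we replace the predicate $t.A_i=0$ by $t.A_i\le 0$ and $t.A_i=1$ by $t.A_i\ge 1$, and additionally forbid $0<A_i<1$ through the two DCs $\neg(t.A_i>0\wedge t.A_i<1)$, $\neg(t.A_i<0)$, $\neg(t.A_i>1)$. This handling of domains is the main step where care is needed; the rest of the reduction is routine.
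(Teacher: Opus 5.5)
Your proposal is correct and is essentially the paper's proof: the same 3-SAT reduction with one attribute per variable and one unary DC per clause that forbids exactly the assignment falsifying it. Your explicit choices of $\mathcal{R}=\emptyset$ and of $\{0,1\}$-valued attributes (by declaration or via extra unary DCs) make precise what the paper leaves implicit. Both are genuinely needed: under the standing assumption $\mathcal{R}\models\Sigma$, a nonempty $\mathcal{R}$ would make the answer trivially yes, and with unrestricted numeric domains a tuple taking values outside $\{0,1\}$ would vacuously satisfy every clause DC.
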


 We consequently focus on solver-based solutions that do not guarantee PTIME convergence but perform well in practice.

\begin{algorithm}[t]
\caption{Vanilla Projection}\label{alg:vanilla-project}
\SetAlgoLined
\SetKwInOut{Input}{input}
\SetKwInOut{Output}{output}
\DontPrintSemicolon
\Input{$t$: tuple to project, $\mathcal{R}$: relation, $\Sigma$: DCs, $\mathcal{I}$: immutable attributes}
\Output{$t'$: projected tuple, or $\bot$ if infeasible}
\tcp{Check for immutable violations}
\For{each $\sigma \in \Sigma$}{ \label{line:v-infeas-start}
    \lIf{$attr(\sigma) \subseteq \mathcal{I}$ \textbf{and} $\exists t_i \in \mathcal{R}: (t, t_i) \not\models \sigma$}{\Return $\bot$} \label{line:v-infeas-end}
}
$opt \gets \text{CreateOptimizer}()$\; \label{line:v-create}
$vars \gets \text{CreateVariables}(sch(\mathcal{R}))$\; \label{line:v-vars}
\tcp{Add DC satisfaction constraints}
\For{each $\sigma \in \Sigma$, each $t_i \in \mathcal{R}$}{ \label{line:v-dc-start}
    $bound \gets \text{InstantiateDC}(\sigma, t_i)$\;
    $opt.\text{add}(\neg \bigwedge_{(A,v,op) \in bound} (vars[A] \circ_{op} v))$\; \label{line:v-dc-end}
}
\tcp{Fix immutable attributes}
\For{each $A \in \mathcal{I}$}{ \label{line:v-immut-start}
    $opt.\text{add}(vars[A] = t.A)$\; \label{line:v-immut-end}
}
\tcp{Set objective and solve}
$opt.\text{minimize}(dist_{agg}(vars, t))$\; \label{line:v-obj}
\lIf{$opt.\text{solve}() = \text{SAT}$}{\Return $opt.\text{getModel}(vars)$} \label{line:v-solve}
\Return $\bot$\;
\end{algorithm}

\paragraph*{``Vanilla'' Solver-Based Solution} Algorithm~\ref{alg:vanilla-project} presents a baseline solution for the projection problem based on an SMT solver. We first check for infeasibility (Lines~\ref{line:v-infeas-start}--\ref{line:v-infeas-end}): if $t$ violates a DC involving only immutable attributes, it cannot be fixed and we return $\bot$ to signal this. Otherwise, the solver is initialized with variables corresponding to the schema (Lines~\ref{line:v-create}--\ref{line:v-vars}). Then, for each DC $\sigma \in \Sigma$ and each tuple $t_i \in \mathcal{R}$, we instantiate $\sigma$ with values from $t_i$ to encode ``forbidden regions'' that the projected tuple must avoid (Lines~\ref{line:v-dc-start}--\ref{line:v-dc-end}). We then create a boolean expression whose form is the negation of the conjunction of these forbidden regions. We further (Lines~\ref{line:v-immut-start}--\ref{line:v-immut-end}) add boolean constraints capturing that immutable attributes must be fixed to their values in $t$, set the objective to distance minimization and run the solver, returning the solution it found or $\bot$ if none was found.

\begin{example}
\label{ex:vanilla}
Consider projecting tuple $p_1$ from Table~\ref{tab:comparison} with immutable attributes $\mathcal{I} = \{\text{Type}, \text{Subloc}\}$. For DC1 and tuple $d_1$ from Figure~\ref{fig:dataset_dcs}, instantiation yields the bound $\{(\text{Type}, \text{Condo}, =), (\text{Beds}, 2, <), (\text{Bath}, 2, <), (\text{Sqft}, 1400, >)\}$. Similar bounds are generated for all $|\mathcal{R}| \cdot |\Sigma|$ tuple-DC pairs, and we conjoin over all of them.
\end{example}

\subsection{Optimizations and Variations}

There are multiple problems slowing down Algorithm~\ref{alg:vanilla-project} and specifically its use in our perturb-and-project framework, as follows:
\begin{itemize}\begin{sloppypar}
\item The framework involves multiple iterations with many {\tt project} invocations, with the same DCs but with different choices of tuple $t$ to project. This means that solver initialization and DCs instantiations are repeated across these invocations.
\end{sloppypar}
\item The number of DC instantiations is $O(|\mathcal{R}| \cdot |\Sigma|)$, one for each tuple in $\mathcal{R}$ and DC in $\Sigma$, which may be very large. 
\item Multiple projections of different tuples may yield the same or similar solutions, which is problematic both in terms of effective search space exploration and in terms of diversity of the final result. 
\end{itemize}
We next present an optimization or variant accounting for each of these points.

\paragraph*{Pre-processing with Optimizer Caching} To account for the repeated initialization step, Algorithm~\ref{alg:pre-processing} constructs and caches optimizer instances keyed by the immutable attribute set $\mathcal{I}$. The key observation is that DC satisfaction constraints (Lines~\ref{line:v-dc-start}--\ref{line:v-dc-end}) depend only on $\mathcal{R}$ and $\Sigma$, not on the specific tuple being projected. Pre-processing thus generates all $O(|\mathcal{R}| \cdot |\Sigma|)$ constraints upfront; once cached, projecting any tuple requires only adding immutable constraints (Lines~\ref{line:v-immut-start}--\ref{line:v-immut-end}) and the distance objective (Line~\ref{line:v-obj}).

\begin{algorithm}[t]
\caption{Pre-processing for Optimizer Construction}\label{alg:pre-processing}
\SetAlgoLined
\SetKwInOut{Input}{input}
\SetKwInOut{Output}{output}
\DontPrintSemicolon
\Input{$\mathcal{R}$: relation, $\Sigma$: DCs, $\mathcal{I}$: immutable attributes, $cache$: global optimizer cache}
\Output{$(opt, vars)$: configured optimizer and variables}
\If{$cache[\mathcal{I}]$ is empty}{
    $opt \gets \text{CreateOptimizer}()$\;
    $vars \gets \text{CreateVariables}(sch(\mathcal{R}))$\;
    \For{each $\sigma \in \Sigma$, each $t_i \in \mathcal{R}$}{
        $bound \gets \text{InstantiateDC}(\sigma, t_i)$\;
        $opt.\text{add}(\neg \bigwedge_{(A,v,op) \in bound} (vars[A] \circ_{op} v))$\;
    }
    $cache[\mathcal{I}] \gets (opt, vars)$\;
}
\Return $cache[\mathcal{I}]$\;
\end{algorithm}


\paragraph*{Suspect Set Filtering} Pre-processing trades upfront cost for universal reuse. But what if we only need CFs for a few tuples, or many tuples share the same immutable attribute values? The next optimization follows the opposite approach: rather than pre-processing all constraints, it builds an optimizer on-demand containing only constraints relevant to the specific tuple's immutable values. This approach is inspired by suspect set techniques used in the different context of constraint-based data cleaning~\cite{chu2013holistic}. 

\begin{definition}
Given tuple $t$, immutable attributes $\mathcal{I}$, relation $\mathcal{R}$, and DCs set $\Sigma$, let $\sigma_{\mathcal{I}}$ denote predicates in $\sigma$ involving only attributes in $\mathcal{I}$. A tuple $t' \in \mathcal{R}$ is a \emph{suspect} w.r.t.\ $t, \mathcal{I}, \sigma$ if $t, t'$ satisfy all predicates in $\sigma_{\mathcal{I}}$. Namely:
$$suspect(t, \mathcal{R}, \sigma, \mathcal{I}) = \{t' \in \mathcal{R}\ |\ \forall_{p \in \sigma_{\mathcal{I}}} (t, t') \models p\}$$
\end{definition}
The key insight is that  if $t$ and $t'$ disagree on an immutable predicate, no modification to mutable attributes can cause a violation and we thus need no constraints corresponding to $t'$. To implement this optimization, we modify Lines~\ref{line:v-dc-start}--\ref{line:v-dc-end} of Algorithm~\ref{alg:vanilla-project}: instead of iterating over all tuples in $\mathcal{R}$, we iterate only over suspect tuples, potentially reducing the number of constraints.

\begin{example}
\label{ex:suspect}
Consider projecting $p_1$ from Table~\ref{tab:comparison} with $\mathcal{I} = \{\text{Type}, \text{Subloc}\}$. For DC1, which compares properties of the same type, suspects must match $p_1.\text{Type} = \text{Condo}$. From Figure~\ref{fig:dataset_dcs}, only $\{d_1, d_2, d_3\}$ are condos, so we can avoid generating a constraint corresponding to $d_4$.
\end{example}

In a sense, the two optimizations we have seen thus far are mutually exclusive: the first builds many constraints but does it at pre-processing time, namely before getting $t$ as input; the second reduces the number of constraints by focusing on instantiations relevant to $t$, yet this can of course be done only when we have $t$ in hand. We can nevertheless combine these two ideas as follows: suspect-set filtering depends only on immutable attribute values, and we thus cache the optimizer's constraints based on these values, and thereby only compute new instantiations when these values change. In particular, this allows reuse within a tuple's perturb-and-project loop (Algorithm~\ref{alg:perturb-project}) since immutable attributes never change.

\paragraph*{Diversity Constraints} A third optimization, which is orthogonal to the first two, relates to the avoidance of generating similar tuples in the perturb-and-project process. To this end, given the set of previously found projections $\mathcal{P} = \{p_1, \ldots, p_{k-1}\}$ and a threshold $\gamma$, we add to the optimizer before Line~\ref{line:v-obj}:
$$\forall p \in \mathcal{P}: |\{A \not\in \mathcal{I} : |t'.A - p.A| > MAD(A,\mathcal{R})\}| \geq \gamma$$
This requires each new projection to differ from every previous one by more than one MAD unit in at least $\gamma$ mutable attributes.
\section{Experiments}
\label{sec:experiments}

We present an experimental evaluation of our solutions. Our main findings, across multiple datasets, are as follows:
\begin{itemize}[leftmargin=*]
    \item {\bf Realism}: Our solutions achieve realism (zero constraint violations), whereas {\em 55.9--100\% of the CFs produced by \DiCE{}  violate at least one constraint}, and some violate many constraints (Table~\ref{tab:neural_constraints}).
    \item {\bf Distance and Diversity}: {\em we are able to achieve comparable distance and diversity of CFs to that achieved by \DiCE{}}, while also achieving realism (under 10\% proximity and ${\sim}3\%$ diversity difference on most datasets, see Figure \ref{fig:neural_quality} and Table \ref{tab:neural_diversity}). Our diversity optimization significantly improves diversity (Table \ref{tab:neural_diversity}).
    \item {\bf Execution Time}: {\em Our optimizations for projection are crucial and outperform the vanilla use of solvers by up to 63$\times$} (Figure~\ref{fig:cf_runtime_and_preproc}). Compared to \DiCE{}, our methods incur
    1.1--10.5$\times$ overhead, as a cost of realism.
\end{itemize}

Additional experiments (Figure~\ref{fig:projection}) show that our solver-based projection fares well compared to optimal results (when available) and alternatives such as closest-instance retrieval. The appendix also evaluates alternative distance functions (Appendix~\ref{app:distance}).

\subsection{Experimental Setup}

We evaluate on four datasets: Adult-Income~\cite{kohavi1996adult}, a demographic dataset for income prediction; NY-Housing~\cite{ny_housing}, containing New York real estate listings for price classification; Tax~\cite{xiao2022fast}, a synthetic dataset common in DC research; and Census-Income~\cite{census_income}, an extended demographic dataset. Characteristics are summarized in Appendix~\ref{app:datasets}. For each dataset, we designate a set of immutable attributes, as we require specifying which attributes cannot be modified; choices per dataset are detailed in Appendix~\ref{app:datasets}. Binary DCs are mined offline using FastADC~\cite{xiao2022fast}; for Adult and NY, we additionally craft unary constraints based on domain knowledge (see Appendix~\ref{app:constraints}). We generate $k=5$ CFs per tuple using weights $w_1 = \frac{2}{3}$ and $w_2 = \frac{1}{3}$ (Equation~\ref{eq:score}), following \DiCE~\cite{mothilal2020explaining}. We use Z3~\cite{de2008z3} as the SMT solver. Results are averaged over 10 tuples per dataset across 10 runs. Beyond \DiCE~\cite{mothilal2020explaining}, we compare against \Vanilla{} (unoptimized SMT projection), \Exhaustive{} (brute-force optimal), and \BestInDataset{} (closest dataset tuple respecting immutable attributes, covering retrieval methods such as NICE~\cite{brughmans2024nice} and FACE~\cite{poyiadzi2020face}), as well as HoloClean~\cite{rekatsinas2017holoclean} (Appendix~\ref{app:holoclean}) and the plausibility-aware methods of CARLA~\cite{pawelczyk2021carla} (Table~\ref{tab:carla}). We omit causal-recourse methods, which require unavailable structural causal models.
Implementation and hardware details appear in Appendix~\ref{app:implementation}.

\subsection{Main Results}

\paragraph{Realism} 
Table~\ref{tab:neural_constraints} quantifies constraint violations. We report the average number of violated DCs per CF (Avg.\ Viol.), unary constraint violations (Unary Viol.), the number of database tuples conflicting with each CF via binary DCs (Tuple Conf.), an inconsistency measure adapted from~\cite{livshits2021properties} and the percentage of CFs violating at least one constraint (Unreal.\ \%). {\em Our solutions yield zero violations, while \DiCE{} produces significant violations}: 100\% of CFs outputted by \DiCE{} are unrealistic for Adult and Tax, 97.1\% for NY, and 55.9\% for Census. These CFs in fact conflict with many tuples: up to 2,300 tuple conflicts per CF. \paragraph{Plausibility-Aware Baselines} We ask whether methods that already model the data distribution produce DC-consistent CFs. We evaluate CARLA's~\cite{pawelczyk2021carla} plausibility-aware methods on Adult: CEM \cite{dhurandhar2018cem}, CCHVAE~\cite{pawelczyk2020learning}, CRUDS~\cite{downs2020cruds}, and FeatureTweak~\cite{tolomei2017interpretable} (GrowingSpheres and CLUE produced no CF). Among domain-valid CFs (with correct one-hot categories and typing), every method, including the VAE-based CCHVAE and CRUDS, violates at least one DC in $100\%$ of cases, with $3.5$ to $4.1$ violations on average (Table~\ref{tab:carla}).

\begin{table}[!htbp]
\centering
\caption{CARLA plausibility-aware methods on Adult. Even generative methods that model the data distribution violate DCs.}
\label{tab:carla}
\small
\setlength{\tabcolsep}{4pt}
\begin{tabular}{|l|l|c|c|c|}
\hline
\textbf{Method} & \textbf{Type} & \textbf{CFs Found} & \textbf{Mean Viol.} & \textbf{Viol. \%} \\
\hline
CEM~\cite{dhurandhar2018cem}                 & gradient & 9/10  & 4.11 & 100 \\
CCHVAE~\cite{pawelczyk2020learning}            & VAE      & 10/10 & 3.90 & 100 \\
CRUDS~\cite{downs2020cruds}                    & CSVAE    & 8/10  & 3.50 & 100 \\
FeatureTweak~\cite{tolomei2017interpretable}  & tree     & 3/10  & 3.67 & 100 \\
\Ours{}                                      & DCs      & 10/10 & 0.00 & 0 \\
\hline
\end{tabular}
\end{table}


\paragraph{Proximity} 
Figure~\ref{fig:neural_quality} compares the proximity achieved by \DiCE{} and by our solutions. Our method, \Ours{}, combines perturb-and-project (Algorithm~\ref{alg:perturb-project}) with the projection optimizations of Section~\ref{sec:project}, including diversity constraints. We also evaluate \Ours{}-Div, a variant without diversity constraints optimization, discussed further in the ablation study. We report two metrics: the combined distance function $dist_{agg}$ (Equation~\ref{eq:dist_agg}), which is the objective both \DiCE{} and our method directly optimize, and L0 distance, counting modified attributes (categorical and numerical). Our methods achieve comparable proximity despite enforcing realism: $dist_{agg}$ increases under 10\% on Adult and Tax, up to ${\sim}31\%$ on Census (9\% without diversity constraints), with similar L0 trends. On NY, we achieve 36--45\% lower $dist_{agg}$ as \DiCE{} overshoots into invalid regions. These gaps to \DiCE{} are statistically significant (paired Wilcoxon, $p<0.001$) but small, reflecting a consistent, modest cost of realism.

\begin{table}[!htbp]
\centering
\caption{Constraint violations for \DiCE. Our solutions yield zero violations.}
\label{tab:neural_constraints}
\small
\setlength{\tabcolsep}{4pt}
\begin{tabular}{|l|r|r|r|r|}
\hline
\textbf{Dataset} & \textbf{Avg. Viol.} & \textbf{Unary Viol.} & \textbf{Tuple Conf.} & \textbf{Unreal. \%} \\
\hline
Adult & 4.16±0.38 & 0.82±0.39 & 778.58±247.03 & 100.0 \\
\hline
NY & 0.71±0.49 & 0.33±0.45 & 22.11±27.04 & 97.1 \\
\hline
Tax & 4.16±1.09 & 0.00±0.00 & 2,326.85±442.89 & 100.0 \\
\hline
Census & 22.18±15.14 & 0.00±0.00 & 722.16±1,101.19 & 55.9 \\
\hline
\end{tabular}
\end{table}

\begin{figure}[!htbp]
\centering
\begin{tikzpicture}[scale = 0.925]
\begin{groupplot}[
    group style={
        group size=2 by 1,
        horizontal sep=1.1cm,
    },
    symbolic x coords={Adult, NY, Tax, Census},
    xtick=data,
    xticklabel style={font=\small},
    yticklabel style={font=\small},
    height=\figheight,
    x=0.8cm,
    ymin=0,
    enlarge x limits=0.18,
]
\nextgroupplot[
    ybar=1pt,
    bar width=5pt,
    ylabel={$dist_{agg}$},
    ylabel style={yshift=-3pt,font=\small},
    xlabel={Dataset},
    legend style={
        at={(1.05,-0.45)},
        anchor=north,
        legend columns=3,
        font=\small,
    },
]
\addplot[fill=orange!70, postaction={pattern=north east lines}] coordinates {(Adult,18.39) (NY,47.61) (Tax,8.15) (Census,25.95)};
\addplot[fill=teal!60, postaction={pattern=bricks}] coordinates {(Adult,19.04) (NY,26.10) (Tax,8.29) (Census,33.97)};
\addplot[fill=blue!60, postaction={pattern=crosshatch}] coordinates {(Adult,19.79) (NY,30.45) (Tax,8.08) (Census,28.33)};
\legend{\DiCE, \Ours, \Ours{}-Div}
\coordinate (bot1) at (rel axis cs:0.5,0);
\nextgroupplot[
    ybar=1pt,
    bar width=5pt,
    ylabel={L0 Distance},
    ylabel style={yshift=-3pt,font=\small},
    xlabel={Dataset},
]
\addplot[fill=orange!70, postaction={pattern=north east lines}] coordinates {(Adult,5.80) (NY,2.40) (Tax,5.05) (Census,15.69)};
\addplot[fill=teal!60, postaction={pattern=bricks}] coordinates {(Adult,6.42) (NY,2.78) (Tax,5.05) (Census,17.64)};
\addplot[fill=blue!60, postaction={pattern=crosshatch}] coordinates {(Adult,6.50) (NY,2.64) (Tax,4.98) (Census,17.02)};
\coordinate (bot2) at (rel axis cs:0.5,0);
\end{groupplot}
\node[below=0.92cm, font=\small] at (bot1) {\textbf{(a) Combined distance}};
\node[below=0.92cm, font=\small] at (bot2) {\textbf{(b) L0 distance}};
\end{tikzpicture}
\caption{CF proximity metrics. Our methods achieve comparable proximity while guaranteeing realism.}
\label{fig:neural_quality}
\end{figure}

\paragraph{Diversity} We measure diversity using the Determinantal Point Process (DPP, Equation~\ref{eq:dpp}). Table~\ref{tab:neural_diversity} shows that \Ours{} achieves DPP scores comparable to \DiCE{}: only 1.7\% lower for Adult, 0.6\% lower for Tax, and identical for Census. For NY, \DiCE{} achieves higher diversity by 7\%, intuitively because the constraints in this dataset limit realistic CF spread. Additional diversity metrics are reported in Appendix~\ref{app:diversity}. Diversity differences are likewise significant ($p<0.001$) yet within a few percent.

\begin{table}[!htbp]
\centering
\caption{CFs DPP diversity scores~\cite{kulesza2012determinantal}.}
\label{tab:neural_diversity}
\small
\setlength{\tabcolsep}{4pt}
\begin{tabular}{|l|c|c|c|c|}
\hline
\textbf{Method} & \textbf{Adult} & \textbf{NY} & \textbf{Tax} & \textbf{Census} \\
\hline
\DiCE & 0.976±0.006 & 0.979±0.010 & 0.886±0.015 & 0.989±0.002 \\
\hline
\Ours & 0.959±0.010 & 0.913±0.049 & 0.881±0.014 & 0.989±0.002 \\
\hline
\Ours-Div & 0.945±0.033 & 0.714±0.353 & 0.870±0.027 & 0.901±0.285 \\
\hline
\end{tabular}
\end{table}

\paragraph{Execution Time} Figure~\ref{fig:cf_runtime_and_preproc}(a) shows CF generation times. \DiCE{} is fastest but produces unrealistic CFs. \PandP{}(\Vanilla{}) is our perturb-and-project framework using the vanilla projection of Algorithm~\ref{alg:vanilla-project}, rebuilding the solver per projection, incurring prohibitive overhead, e.g., 18,871s on Census. Our optimized variants, \Ours{}(\PreProc{}) and \Ours{}(\Suspect{}), dramatically outperform \PandP{}(\Vanilla{}): up to 63$\times$ faster on Census (301s and 334s vs.\ 18,871s). Compared to \DiCE{}, both variants add moderate overhead as the cost of realism: for \Ours{}(\PreProc{}), 2.9$\times$ on NY, 10.5$\times$ on Adult, and 6.9$\times$ on Census (\Ours{}(\Suspect{}) exhibits similar ratios).

The two optimizations offer complementary tradeoffs. \PreProc{} pre-builds all DC instantiations upfront (Figure~\ref{fig:cf_runtime_and_preproc}(b)), with a one-time cost of 1.8s (NY) to 6,168s (Census), after which projections reuse the cached solver. \Suspect{}, by contrast, builds constraints on-demand by filtering to tuples relevant to the specific immutable attribute values (Section~\ref{sec:project}), incurring no upfront cost. On Census, where the full constraint set contains over 7M instantiations, \Suspect{} reduces this by 96.5\% through filtering, achieving comparable runtime (334s vs.\ 301s) without the heavy preprocessing step. \PreProc{} is preferable when generating CFs for many tuples, amortizing its upfront cost; \Suspect{} is preferable when CFs are needed for few tuples or when preprocessing is impractical. A detailed comparison of bounds-building costs and constraint counts is provided in Appendix~\ref{app:suspect_eval}. We further study how runtime scales with dataset size and number of constraints in Appendix~\ref{app:scalability}.

\begin{figure}[!htbp]
\centering
\begin{tikzpicture}
\begin{groupplot}[
    group style={
        group size=2 by 1,
        horizontal sep=0.65cm,  
    },
    symbolic x coords={Adult, NY, Tax, Census},
    xtick=data,
    xticklabel style={font=\small},
    yticklabel style={font=\small},
    height=\figheight,
    ymode=log,
    log origin=infty,
    ymin=1,
    ytick={1,10,100,1000,10000},
    yticklabels={$10^0$,$10^1$,$10^2$,$10^3$,$10^4$},  
]
\nextgroupplot[
    x=0.9cm,  
    ybar=1pt,
    bar width=4.5pt,
    ylabel={Runtime (seconds)},
    ylabel style={yshift=-3pt,font=\small},
    xlabel={Dataset},
    legend style={
        at={(1.05,-0.48)},
        anchor=north,
        legend columns=4,
        font=\small,
    },
    enlarge x limits=0.15,
]
\addplot[fill=orange!70, postaction={pattern=north east lines}] coordinates {(Adult,9.53) (NY,6.10) (Tax,49.61) (Census,43.77)};
\addplot[fill=violet!60, postaction={pattern=grid}] coordinates {(Adult,449.71) (NY,21.33) (Tax,76.06) (Census,18871.76)};
\addplot[fill=blue!60, postaction={pattern=crosshatch}] coordinates {(Adult,99.85) (NY,17.59) (Tax,54.59) (Census,301.27)};
\addplot[fill=red!60, postaction={pattern=dots}] coordinates {(Adult,115.56) (NY,16.27) (Tax,57.21) (Census,333.61)};
\legend{\DiCE{}, \PandP{}(\Vanilla{}), \Ours(\PreProc{}), \Ours(\Suspect{})}
\coordinate (bot1) at (rel axis cs:0.5,0);
\nextgroupplot[
    x=0.8cm,  
    ybar=1pt,
    bar width=5pt,
    xlabel={Dataset},
    enlarge x limits=0.18,
]
\addplot[fill=blue!60, postaction={pattern=crosshatch}] coordinates {(Adult,10.83) (NY,1.83) (Tax,4.78) (Census,6168.02)};
\coordinate (bot2) at (rel axis cs:0.5,0);
\end{groupplot}
\node[below=0.9cm, font=\small] at (bot1) {\textbf{(a) CF generation runtime}};
\node[below=0.9cm, font=\small] at (bot2) {\textbf{(b) Preprocessing time}};
\end{tikzpicture}
\caption{(a) \Ours{} guarantees realistic CFs with up to 63$\times$ speedup over \Vanilla. (b) \PreProc{} preprocessing runtime cost.}
\label{fig:cf_runtime_and_preproc}
\end{figure}

\vspace{-1mm}
\subsection{Ablation}

\begin{figure}[!htbp]
\centering
\begin{tikzpicture}
\begin{groupplot}[
    group style={
        group size=2 by 1,
        horizontal sep=0.8cm,
    },
    symbolic x coords={Adult, NY, Tax, Census},
    xtick=data,
    xticklabel style={font=\small},
    yticklabel style={font=\small},
    height=\figheight,
]
\nextgroupplot[
    x=0.9cm,  
    ybar=1pt,
    bar width=3.5pt,
    ymin=0.01,
    ymode=log,
    log origin=infty,
    ytick={0.1,1,10,100,1000,10000},
    yticklabels={$10^{-1}$,$10^0$,$10^1$,$10^2$,$10^3$,$10^4$},
    ylabel={Runtime (seconds)},
    ylabel style={yshift=-3pt,font=\footnotesize},
    xlabel={Dataset},
    legend style={
        at={(1.05,-0.50)},
        anchor=north,
        legend columns=5,
        font=\small,
    },
    enlarge x limits=0.15,
]
\addplot[fill=blue!60, postaction={pattern=crosshatch}] coordinates {(Adult,0.42) (NY,0.28) (Tax,0.25) (Census,24.68)};
\addplot[fill=red!60, postaction={pattern=dots}] coordinates {(Adult,2.11) (NY,0.62) (Tax,0.28) (Census,58.33)};
\addplot[fill=violet!60, postaction={pattern=grid}] coordinates {(Adult,11.25) (NY,2.11) (Tax,5.03) (Census,6192.70)};
\addplot[fill=gray!40, postaction={pattern=horizontal lines}] coordinates {(Adult,0.10) (NY,0.05) (Tax,7.86) (Census,0.30)};
\addplot[fill=green!60, postaction={pattern=north west lines}] coordinates {(Adult,6344) (NY,10.66)};
\legend{PreProc, Suspect, Vanilla, Best-in-Data, Exhaust}
\coordinate (bot1) at (rel axis cs:0.5,0);
\nextgroupplot[
    x=0.725cm,
    ybar=1pt,
    bar width=5pt,
    ymin=0,
    ylabel={$dist_{agg}$},
    ylabel style={yshift=-3pt,font=\footnotesize},
    xlabel={Dataset},
    enlarge x limits=0.18,
]
\addplot[fill=blue!60, postaction={pattern=crosshatch}] coordinates {(Adult,5.03) (NY,25.61) (Tax,3.62) (Census,4.67)};
\addplot[fill=gray!40, postaction={pattern=horizontal lines}] coordinates {(Adult,8.18) (NY,50.56) (Tax,2.15) (Census,13.88)};
\addplot[fill=green!60, postaction={pattern=north west lines}] coordinates {(Adult,4.81) (NY,19.92)};
\coordinate (bot2) at (rel axis cs:0.5,0);
\end{groupplot}
\node[below=1.0cm, font=\small] at (bot1) {\textbf{(a) Per-tuple projection runtime}};
\node[below=1.0cm, font=\small] at (bot2) {\textbf{(b) Projection distance}};
\end{tikzpicture}
\caption{(a) Per-tuple runtime: \PreProc{} is fastest after preprocessing; \Exhaustive{} feasible only for Adult and NY. (b) Projection distance: \PreProc{} achieves near-optimal quality.}
\label{fig:projection}
\vspace{-3.5mm}
\end{figure}

\paragraph{Diversity Optimization} 
Table~\ref{tab:neural_diversity} demonstrates the effect of our diversity constraints, with $\gamma = 2$. \Ours{} consistently improves diversity over \Ours{}-Div, most notably on NY, where the DPP score increases from 0.714 to 0.913. Without diversity constraints, the tight feasible region in NY leads to clustered projections. With diversity constraints, \Ours{} achieves DPP scores comparable to \DiCE{}: within 1.7\% on Adult, 0.6\% on Tax, and near identical on Census. In terms of proximity, \Ours{} remains comparable to both \Ours{}-Div and \DiCE{} (Figure~\ref{fig:neural_quality}), with less than 4\% difference. The diversity optimization incurs negligible runtime overhead (omitted from Figure~\ref{fig:projection}).
\paragraph{Alternative Projection Methods} Figure~\ref{fig:projection}(a) evaluates projection strategies in isolation. Using a solver without our optimizations (\Vanilla{}) is prohibitively slow: 8--27$\times$ slower than \PreProc{} on Adult, NY, and Tax, and 250$\times$ slower on Census. We also evaluate non-solver alternatives: \Exhaustive{} (enumerating the discretized valid space for optimal projections) and \BestInDataset{} (returning the closest dataset tuple matching immutable attributes). \Exhaustive{} is impractical, exceeding 100 minutes per projection on Adult and 12-hour limits on Tax and Census. \BestInDataset{} has 10--20\% failure rates when immutable attributes are present, and among successful projections, 63--197\% higher distance on Adult, NY, and Census (Figure~\ref{fig:projection}(b)). On Tax, where no attributes are immutable, \BestInDataset{} achieves lower distance but is significantly slower. We further evaluated HoloClean~\cite{rekatsinas2017holoclean}, a probabilistic data cleaning framework; it leaves 89\% (Adult) and 100\% (NY) of violations unresolved (Appendix~\ref{app:holoclean}), confirming general-purpose cleaning is unsuitable for targeted projection.

\section{Related Work}
\label{sec:related}

CF generation has been extensively studied in Explainable AI~\cite{wachter2017counterfactual,jiang2024robust,verma2024counterfactual,guidotti2024counterfactual}. A related but distinct concept, \emph{algorithmic recourse}~\cite{ustun2019actionable,karimi2021algorithmic}, specifies \emph{how} to reach a different prediction through actionable interventions, whereas CF explanations identify \emph{what} features would need to differ. Our work focuses on CF explanations, finding realistic alternative instances, rather than prescribing action sequences.

Most lines of research~\cite{bewley2024counterfactual,dandl2020multi} in this context, such as DiCE which we have discussed above, do not address the realism of resulting CFs. Our experiments with DiCE demonstrate that realism is unlikely to be achieved spuriously if one does not optimize for it. Other approaches do aim at realism, there is no single notion of realism, and they capture different, partially overlapping facets of it. Some approaches~\cite{poyiadzi2020face,brughmans2024nice,jiang2025robustx} limit attention to instances in a given dataset, e.g., the training set. As we have demonstrated (see comparisons with best-in-dataset in Section~\ref{sec:experiments}), this may hamper quality as the dataset may simply not include sufficiently many, close, or diverse CF candidates. There is also a privacy concern: explaining a prediction made for Bob through the classification of Alice may infringe Alice's right to privacy. A further family of methods captures realism statistically, as closeness to a learned data distribution, e.g.\ via variational autoencoders or prototypes, as in the methods of the CARLA library~\cite{pawelczyk2021carla,dhurandhar2018cem,pawelczyk2020learning,downs2020cruds,tolomei2017interpretable}. To generalize beyond the existing dataset, causal approaches~\cite{karimi2021algorithmic,mahajan2019preserving,karimi2020algorithmic} assume either full structural causal models (SCMs), or a simple causal graph. Both are rarely available in practice and require significant domain knowledge. Crucially, none of these approaches \emph{guarantees} a valid CF: each optimizes closeness to the data, yet a statistically plausible instance may still violate the integrity rules the data obeys, as we confirm experimentally. By contrast, Denial Constraints that we use here (1) may be automatically mined (as we did in our experiments using FastADC~\cite{xiao2022fast}) and (2) were shown in~\cite{ben2024cafa}, in the different context of adversarial training, to capture well the space of valid instances. Indeed adversarial examples somewhat resemble CFs, yet they seek a single example, with no repeated constraint solving (thus no optimization) and no notion of diversity. The only other works, to our knowledge, that incorporate database-style constraints for CF explanations are GeCo~\cite{schleich2021geco} and CeC~\cite{deutch2019constraints}. Both CeC and GeCo support unary but not binary DCs, which significantly hampers expressiveness, and again do not address diversity.

Beyond CFs, constraints~\cite{abiteboul1995foundations} are widely used to capture data integrity, and various formalisms have been proposed to capture constraints. These include functional dependencies~\cite{codd1970relational}, conditional functional dependencies~\cite{bohannon2006conditional,fan2008conditional}, and others. DCs generalize these and other formalisms~\cite{chu2013discovering,bleifuss2017hydra}. Among them, DCs are attractive on several counts. As a \emph{formalism}, each DC is an explicit logical rule, so whether a tuple satisfies it can be checked exactly, unlike a learned model whose validity notion is implicit. DCs are also the most \emph{expressive} such formalism, and are accordingly adopted by most major data-quality systems~\cite{rekatsinas2017holoclean,dallachiesa2013nadeef,geerts2013llunatic}. This is what lets us \emph{guarantee} realism: a tuple either satisfies all DCs or it does not, so enforcing them in the projection yields zero violations by construction. The guarantee is also robust to imperfect constraints, as DCs can be mined automatically with perfect recall~\cite{chu2013discovering,bleifuss2017hydra,xiao2022fast}, and an overly specific mined set only narrows the admissible space, never admitting an invalid CF, while a domain expert may add further constraints. Beyond data cleaning, DCs are also used in consistent query answering~\cite{bertossi2019database,arenas1999consistent}, data integration, and data profiling~\cite{abedjan2015profiling}. Constraints are also extensively used in the context of data repair, where the goal is to find (minimal) modifications to the database so that a set of constraints holds~\cite{datarepairsurvey,chu2013holistic,rekatsinas2017holoclean,song2016constraint}. While our suspect-set optimization is inspired by data cleaning literature~\cite{chu2013holistic}, a key difference is that we focus on modifying a single tuple rather than the entire database. Indeed, we have shown that using HoloClean~\cite{rekatsinas2017holoclean} for this purpose fails.
\section{Conclusions}
\label{sec:conc}

We have studied in this paper the problem of finding counterfactuals for predictions made by ML models, constraining that the counterfactuals are {\em realistic} as captured by Denial Constraints with respect to a given database. We optimize for proximity and diversity in this constrained space. We have shown that the state-of-the-art solution often yields non-realistic counterfactuals, whereas we achieve 100\% realism and comparable quality with respect to proximity and diversity. We show that a vanilla constraint solver-based solution is not scalable, yet we are able to achieve scalability through dedicated optimizations. For future work, we will explore additional optimizations as well as implementations in additional application domains such as that of adversarial robustness. 

\begin{acks}
This work of  Daniel Deutch and Avia Asael was partially funded by the European Research Council (ERC) under the European Union's Horizon 2020 research and innovation programme (Grant agreement No. 804302), by the Israeli Science Foundation (Grant 1476/24), by Len Blavatnik and the Blavatnik Family foundation, and by the Deutsch foundation. The work of Amir Gilad was funded by the Israel Science Foundation (ISF) under grant 1702/24, the Scharf-Ullman Endowment, and the Alon Scholarship.
\end{acks}
\clearpage
\bibliographystyle{ACM-Reference-Format}
\balance
\bibliography{sample}

\clearpage
\appendix
\section{Proofs}
\label{app:proofs}

\begin{proof}[Proof of Proposition~\ref{prop:np-hard}]
The proof is via reduction from 3-SAT. Given a boolean CNF expression $\phi$, we design a relation $\mathcal{R}$ whose attributes correspond to the variables of $\phi$. For every clause $C_i = l_1 \vee l_2 \vee l_3$ in $\phi$, we design the DC:
$\forall t.\ \neg (t.a_1 = b_1 \wedge t.a_2 = b_2 \wedge t.a_3 = b_3)$
where $a_i$ is the variable of literal $l_i$, and $b_i = 1$ if $a_i$ appears negatively in $l_i$, otherwise $b_i = 0$. 

We have that $\phi$ is satisfiable if and only if there exists a tuple satisfying all DCs. For one direction, consider a satisfying assignment to $\phi$, and design a tuple with $1$ in attributes corresponding to variables set to $true$, and $0$ elsewhere. Since the assignment satisfies every clause, the tuple satisfies every DC. Conversely, a tuple satisfying all DCs translates into a truth assignment satisfying all clauses.
\end{proof}

\section{Additional Experimental Details}
\label{app:datasets}

\begin{table}[h]
  \centering
  \caption{Dataset characteristics: number of tuples, attributes, and DCs.}
  \label{tab:datasets_desc}
  \small
  \begin{tabular}{l r @{\hspace{10pt}} r @{\hspace{10pt}} r r @{\hspace{10pt}} r}
    \toprule
    \multirow{2}{*}{Dataset} & \multirow{2}{*}{Tuples} & \multirow{2}{*}{Attributes} & \multicolumn{2}{c}{Attr. Types} & \multirow{2}{*}{Constraints} \\
    \cmidrule(lr){4-5}
    & & & Cat & Num & \\
    \midrule
    Adult & 30,014 & 11 & 8 & 3 & 6 \\
    NY-Housing & 2,996 & 6 & 3 & 3 & 9 \\
    Tax & 99,904 & 9 & 5 & 4 & 8 \\
    Census & 94,786 & 39 & 28 & 11 & 200 \\
    \bottomrule
  \end{tabular}
\end{table}

The Adult-Income dataset~\cite{kohavi1996adult} contains demographic and educational attributes; the task is to predict whether income exceeds \$50,000. The NY-Housing dataset~\cite{ny_housing} contains New York real estate properties; the task is to classify whether price exceeds \$1,000,000. The Tax dataset~\cite{xiao2022fast} is synthetic and commonly used in DC discovery research; the task is to predict whether tax rate exceeds 5\%. The Census-Income dataset~\cite{census_income} is similar to Adult with more recent data and richer attributes.

\subsection{Immutable Attributes}
\label{app:immutables}
Immutable attributes determine which attributes CFs are allowed to modify, focusing the search on relevant changes. For each dataset, we designated a representative set: demographic attributes such as age, race, and sex for Adult and Census, and structural attributes such as property type and sublocality for NY-Housing. These choices reflect attributes that are naturally fixed in each domain; other reasonable choices are possible and our framework supports any such designation. For Tax, no attributes were designated as immutable.

\subsection{Constraint Selection}
\label{app:constraints}

For Adult and NY, we selected meaningful binary constraints from those mined by FastADC and manually crafted unary constraints based on domain knowledge reflected in the data. For Tax, we used all mined constraints. For Census, which produced 2,566 constraints, we randomly sampled 200 to maintain tractability while preserving constraint diversity. Example DCs appear in Table~\ref{tab:example_constraints}.

\subsection{Implementation Details}
\label{app:implementation}

For each dataset, we train a neural network classifier with a single hidden layer (100 units, ReLU activation, softmax output) implemented in PyTorch. All experiments ran on a server with Intel Xeon Gold 6252 CPU (2.10 GHz, 96 cores) and 1TB RAM, running Ubuntu 18.04 LTS.

\section{Additional Experiments}
\label{app:experiments}

\subsection{Projection Scalability}
\label{app:scalability}
\begin{figure}[!htbp]
    \centering
    \begin{subfigure}{0.495\columnwidth}
        \centering
        \includegraphics[width=\textwidth]{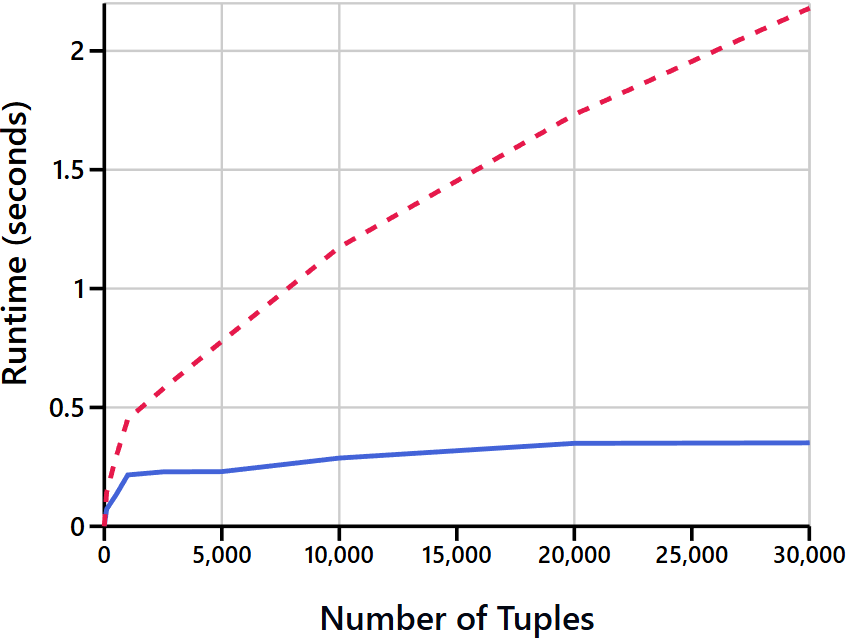}
        \caption{Runtime comparison}
        \label{fig:runtime_dataset_size}
    \end{subfigure}
    \hfill
    \begin{subfigure}{0.495\columnwidth}
        \centering
        \includegraphics[width=\textwidth]{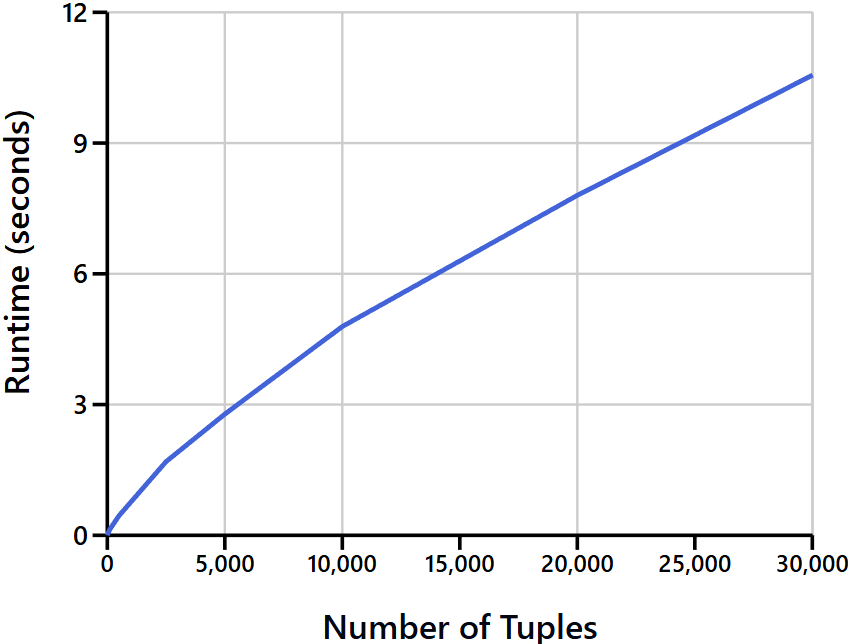}
        \caption{Pre-processing (\textsf{PreProc})}
        \label{fig:preprocessing_dataset_size}
    \end{subfigure}
    
    \vspace{-1mm}
    \centering
    \footnotesize
    \begin{tikzpicture}[baseline=-0.5ex]
        \draw[blue, line width=1.5pt] (0,0) -- (0.4,0);
    \end{tikzpicture} PreProc \quad
    \begin{tikzpicture}[baseline=-0.5ex]
        \draw[red, line width=1.5pt, dashed] (0,0) -- (0.4,0);
    \end{tikzpicture} Suspect
    \vspace{-2mm}
    \caption{Scalability w.r.t.\ dataset size (Adult dataset).}
    \label{fig:dataset_size_combined}
\end{figure}

\paragraph{Runtime vs Dataset Size}
Runtime scales linearly with dataset size for both variants. Figure~\ref{fig:dataset_size_combined} shows Adult results: \textsf{PreProc} completes projections in 0.35s at 30K tuples while \textsf{Suspect} requires 2.18s. The linear scaling stems from constraint instantiation: each database tuple potentially participates in DC violations, requiring assertions proportional to relation size. Pre-processing time and memory follow the same linear pattern. Notably, \textsf{Suspect} consumes significantly less memory than \textsf{PreProc} despite similar runtime costs, as it filters constraints at projection time.

\begin{figure}[!htbp]
    \centering
    \begin{subfigure}{0.495\columnwidth}
        \centering
        \includegraphics[width=\textwidth]{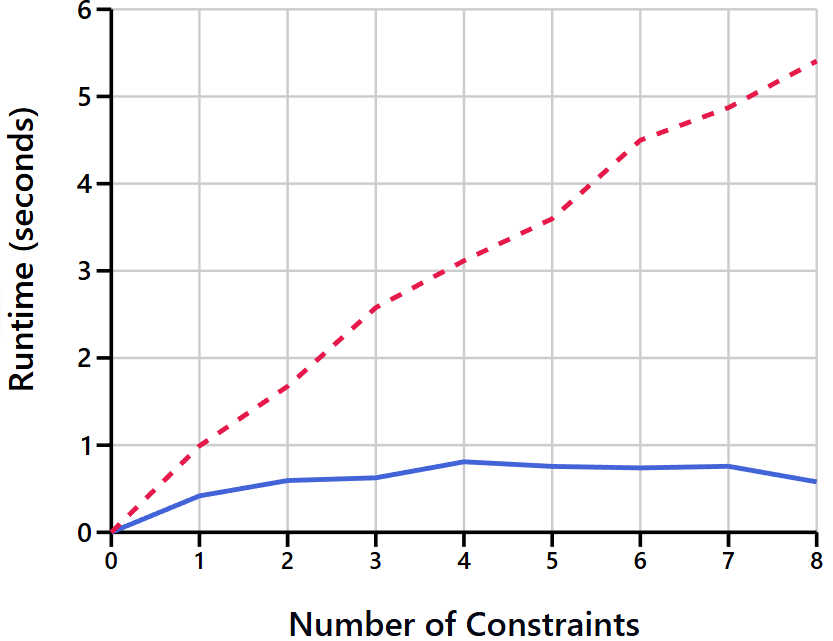}
        \caption{Runtime comparison}
        \label{fig:runtime_constraint_size}
    \end{subfigure}
    \hfill
    \begin{subfigure}{0.495\columnwidth}
        \centering
        \includegraphics[width=\textwidth]{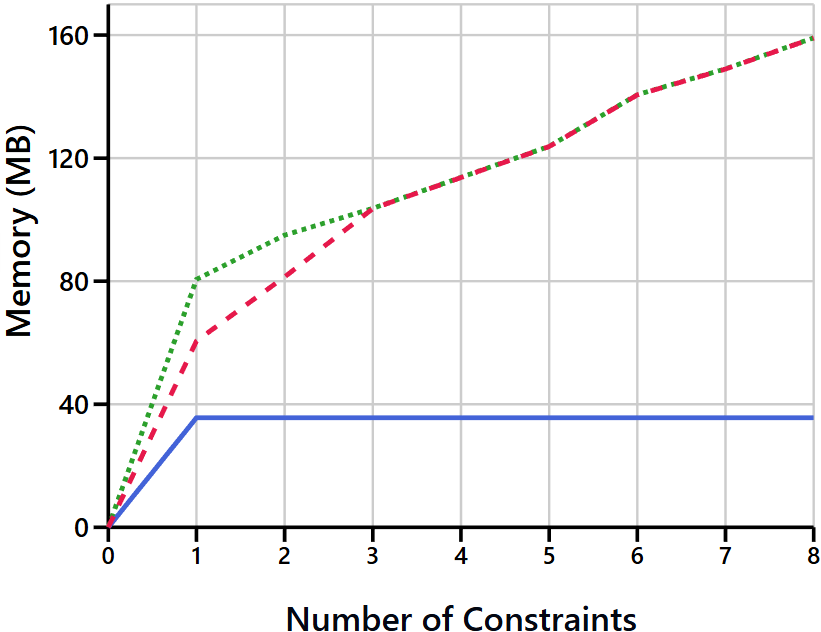}
        \caption{Memory comparison}
        \label{fig:memory_constraint_size}
    \end{subfigure}
    
    \vspace{-1mm}
    \centering
    \footnotesize
    \begin{tikzpicture}[baseline=-0.5ex]
    \draw[blue, line width=1.5pt] (0,0) -- (0.4,0);
    \end{tikzpicture} PreProc \quad
    \begin{tikzpicture}[baseline=-0.5ex]
        \draw[red, line width=1.5pt, dashed] (0,0) -- (0.4,0);
    \end{tikzpicture} Suspect \quad
    \begin{tikzpicture}[baseline=-0.5ex]
        \draw[green!50!black, line width=1.5pt, dotted] (0,0) -- (0.4,0);
    \end{tikzpicture} Pre-processing
    \vspace{-2mm}
    \caption{Scalability w.r.t.\ number of constraints (Tax dataset).}
    \label{fig:constraints_size_combined}
\end{figure}

\paragraph{Runtime vs Number of Constraints}
Runtime increases linearly with the number of denial constraints, as demonstrated on Tax (Figure~\ref{fig:constraints_size_combined}). \textsf{PreProc} projection time grows from 0.42s with 1 constraint to 0.58s with 8 constraints, while \textsf{Suspect} increases from 0.99s to 5.41s. Memory scales similarly, from 80MB to 159MB for pre-processing. \textsf{Suspect} shows steeper runtime growth because it cannot amortize constraint compilation, rebuilding the optimizer for each projection.

\subsection{Additional Diversity Metrics}
\label{app:diversity}

Table~\ref{tab:diversity_full} reports additional diversity metrics beyond the DPP scores presented in the main paper. Average pairwise distance (Pair.\ Div.) measures the mean $dist_{agg}$ between all CF pairs, capturing overall spread. Minimum pairwise distance (Min.\ Dist.\ Div.) reports the smallest distance between any two CFs, indicating whether any pair is overly similar~\cite{mothilal2020explaining,dandl2020multi}. Both are computed using $dist_{agg}$ (Equation~\ref{eq:dist_agg}).

The trends are consistent with the DPP results: \Ours{} achieves pairwise diversity comparable to \DiCE{} on most datasets, with the largest gap on NY due to the tight feasible region imposed by constraints. Our framework currently optimizes for DPP diversity; adapting the diversity constraints in Section~\ref{sec:project} to directly optimize alternative diversity measures such as pairwise or minimum distance is a natural direction for future work.

\begin{table}[h]
\centering
\caption{Full diversity metrics. DPP: Determinantal Point Process diversity~\cite{kulesza2012determinantal}; Pair.\ Div.: average pairwise distance; Min.\ Dist.\ Div.: minimum distance between any CF pair.}
\label{tab:diversity_full}
\small
\setlength{\tabcolsep}{4pt}
\begin{tabular}{|l|l|c|c|c|}
\hline
\textbf{Dataset} & \textbf{Method} & \textbf{DPP} & \textbf{Pair.\ Div.} & \textbf{Min.\ Dist.\ Div.} \\
\hline
\multirow{3}{*}{Adult} 
 & \DiCE & 0.976±0.006 & 21.71±3.49 & 12.33±2.37 \\
 & \Ours & 0.959±0.010 & 16.13±2.27 & 9.00±2.62 \\
 & \Ours-Div & 0.945±0.033 & 15.95±2.67 & 8.03±3.34 \\
\hline
\multirow{3}{*}{NY} 
 & \DiCE & 0.979±0.010 & 38.64±10.53 & 10.76±5.14 \\
 & \Ours & 0.913±0.049 & 21.40±8.22 & 5.05±2.68 \\
 & \Ours-Div & 0.714±0.353 & 21.34±11.95 & 3.23±2.35 \\
\hline
\multirow{3}{*}{Tax} 
 & \DiCE & 0.886±0.015 & 8.23±0.69 & 5.00±0.95  \\
 & \Ours & 0.881±0.014 & 7.99±0.56 & 4.91±1.00 \\
 & \Ours-Div & 0.870±0.027 & 7.99±0.54 & 4.18±1.27 \\
\hline
\multirow{3}{*}{Census} 
 & \DiCE & 0.989±0.002 & 28.76±2.23 & 20.97±2.87 \\
 & \Ours & 0.989±0.002 & 29.45±3.09 & 20.60±2.42 \\
 & \Ours-Div & 0.901±0.285 & 31.88±3.25 & 20.33±6.85 \\
\hline
\end{tabular}
\end{table}

\subsection{Suspect-Set Optimization Evaluation}
\label{app:suspect_eval}

Figure~\ref{fig:suspect_eval} evaluates the Suspect-set filtering optimization described in Section~\ref{sec:project}. Figure~\ref{fig:suspect_eval}(a) compares bounds-building time: \Suspect{} reduces this step by 6--35$\times$ compared to \PreProc{}, with the largest gain on Census (176.8s vs.\ 6,168s). For Tax, both methods produce identical bounds since all tuples are suspects. Figure~\ref{fig:suspect_eval}(b) shows the number of optimizer bounds (DC instantiations). \Suspect{} reduces instantiations by up to 96.5\% on Census (246K vs.\ 7.19M), explaining the runtime gains. On datasets with fewer tuples or less selective immutable attributes, the reduction is smaller but still substantial (e.g., 85\% on Adult and NY).

\begin{figure}[!htbp]
\centering
\begin{tikzpicture}
\begin{groupplot}[
    group style={
        group size=2 by 1,
        horizontal sep=0.9cm,
    },
    symbolic x coords={Adult, NY, Tax, Census},
    xtick=data,
    xticklabel style={font=\small},
    yticklabel style={font=\small},
    height=\figheight,
    ymode=log,
    log origin=infty,
    enlarge x limits=0.18,
]
\nextgroupplot[
    x=0.8cm,
    ybar=1pt,
    bar width=5pt,
    ymin=0.5,
    ytick={1,10,100,1000,10000},
    yticklabels={$10^0$,$10^1$,$10^2$,$10^3$,$10^4$},
    ylabel={Time (seconds)},
    ylabel style={yshift=-3pt,font=\small},
    xlabel={Dataset},
    legend style={
        at={(1.05,-0.45)},
        anchor=north,
        legend columns=2,
        font=\small,
    },
]
\addplot[fill=blue!60, postaction={pattern=crosshatch}] coordinates {(Adult,10.83) (NY,1.83) (Tax,4.78) (Census,6168.02)};
\addplot[fill=red!60, postaction={pattern=dots}] coordinates {(Adult,1.86) (NY,1.14) (Tax,4.78) (Census,176.78)};
\legend{\PreProc{}, \Suspect{}}
\coordinate (bot1) at (rel axis cs:0.5,0);
\nextgroupplot[
    x=0.8cm,
    ybar=1pt,
    bar width=5pt,
    ymin=100,
    ytick={100,1000,10000,100000,1000000,10000000},
    yticklabels={$10^2$,$10^3$,$10^4$,$10^5$,$10^6$,$10^7$},
    ylabel={\# Optimizer Bounds},
    ylabel style={yshift=-3pt,font=\small},
    xlabel={Dataset},
]
\addplot[fill=blue!60, postaction={pattern=crosshatch}] coordinates {(Adult,25468) (NY,4620) (Tax,1143) (Census,7189517)};
\addplot[fill=red!60, postaction={pattern=dots}] coordinates {(Adult,3597) (NY,622) (Tax,1143) (Census,246563)};
\coordinate (bot2) at (rel axis cs:0.5,0);
\end{groupplot}
\node[below=0.92cm, font=\small] at (bot1) {\textbf{(a) Bounds-building time}};
\node[below=0.92cm, font=\small] at (bot2) {\textbf{(b) Number of bounds}};
\end{tikzpicture}
\caption{Suspect-set filtering evaluation. (a)~\Suspect{} reduces bounds-building time by up to 35$\times$. (b)~\Suspect{} reduces the number of DC instantiations by up to 96.5\%.}
\label{fig:suspect_eval}
\end{figure}

\subsection{Projection via HoloClean}
\label{app:holoclean}

We evaluated HoloClean~\cite{rekatsinas2017holoclean}, a probabilistic data cleaning framework, as an alternative for projection. Table~\ref{tab:holoclean} shows that HoloClean leaves 89\% (Adult) and 100\% (NY) violations unresolved, confirming general-purpose cleaning is unsuitable for targeted tuple projection.

\begin{table}[!htbp]
\centering
\caption{HoloClean projection results. Violations remain largely unresolved.}
\label{tab:holoclean}
\small
\setlength{\tabcolsep}{2.25pt}
\begin{tabular}{|l|l|c|c|c|c|}
\hline
\textbf{Dataset} & \textbf{Method} & \textbf{Avg. Viol.} & \textbf{Unary Viol.} & \textbf{Tuple Conf.} & \textbf{Unreal. \%} \\
\hline
\multirow{2}{*}{Adult} & Input & 4.18±0.71 & 0.63±0.48 & 575.27±538.55 & 100.0 \\
 & HC & 3.72±1.05 & 0.63±0.48 & 366.09±476.22 & 89.0 \\
 \hline
\multirow{2}{*}{NY} & Input & 1.40±1.11 & 1.00±1.00 & 77.00±229.33 & 100.0 \\
 & HC & 1.40±1.11 & 1.00±1.00 & 77.00±229.33 & 100.0 \\
\hline
\end{tabular}
\end{table}

\subsection{Custom Distance Functions}
\label{app:distance}

Our projection algorithm supports arbitrary distance functions encoded in the solver. We evaluate three distance functions on the Adult dataset: L0 (counting changed attributes for both categorical and numerical), $dist_{agg}$ (Equation~\ref{eq:dist_agg}), and a custom function with domain-specific categorical distances. The custom function encodes semantic transition costs; for example, changing occupation from Farming\_fishing to Prof\_specialty has distance 4 (reflecting a significant career change), while transitioning marital status from Widowed to Never\_married is assigned $\infty$ (impossible).

\begin{table}[!htbp]
\centering
\caption{Projection with different distance functions on Adult. Each row optimizes the distance function in the first column; remaining columns show the resulting projection distances under all three metrics.}
\label{tab:distance_functions}
\small
\setlength{\tabcolsep}{4pt}
\begin{tabular}{|l|r|r|r|r|}
\hline
\textbf{Optimized} & \textbf{Runtime (s)} & \textbf{$dist_{agg}$} & \textbf{L0} & \textbf{Custom} \\
\hline
L0 & 0.17 & 8.82 & 2.55 & 191.03 \\
$dist_{agg}$ & 0.35 & 5.21 & 2.55 & 96.54 \\
Custom & 1.13 & 5.21 & 2.55 & 5.68 \\
\hline
\end{tabular}
\end{table}

Table~\ref{tab:distance_functions} shows that each function successfully optimizes its target metric. Runtime scales with distance function complexity: L0 (simple counting) is fastest at 0.17s, $dist_{agg}$ (MAD-normalized) requires 0.35s, and the custom function (pairwise categorical lookups) takes 1.13s. These results demonstrate that our solver-based projection is robust to the choice of distance function, with runtime proportional to encoding complexity.

\end{document}